\definecolor{red}{rgb}{0.45,0.0,0.0}
\date{}
\def\Gauss{{\mathrm{N}}}
\def\T{\mathrm{\scriptscriptstyle{T}}}
\newtheorem{theorem}{Theorem}[section]
\newtheorem{lemma}[theorem]{Lemma}
\newtheorem{proposition}[theorem]{Proposition}
\newtheorem{rem}{Remark}[section]
\newcommand{\norm}[1]{\left\lVert #1 \right\rVert}
\DeclareMathOperator*{\argmin}{arg\,min}
\def\cred{\color{black}}
\newcommand{\blind}{1}
\begin{document}

\def\spacingset#1{\renewcommand{\baselinestretch}%
{#1}\small\normalsize} \spacingset{1}


\if1\blind
{
  \title{\bf Orthogonal calibration via posterior projections with applications to the Schwarzschild model}
  \author{Antik Chakraborty\\
    Department of Statistics, Purdue University\\
    Jonelle L. Walsh \\
    George P.~and Cynthia Woods Mitchell Institute for Fundamental Physics and Astronomy, Department of Physics and Astronomy, Texas A\&M University\\
    Louis Strigari \\
    George P.~and Cynthia Woods Mitchell Institute for Fundamental Physics and Astronomy, Department of Physics and Astronomy, Texas A\&M University\\
    Bani K. Mallick\\
    Department of Statistics, Texas A\&M University\\
    Anirban Bhattacharya \\
    Department of Statistics, Texas A\&M University\\
    
    }
    
  \maketitle
} \fi

\if0\blind
{
  \bigskip
  \bigskip
  \bigskip
  \begin{center}
    {\LARGE\bf \bf Orthogonal calibration via posterior projections with applications to the Schwarzschild model}
\end{center}
  \medskip
} \fi
\bigskip
\begin{abstract}
The orbital superposition method originally developed by \cite{schwarzschild1979numerical} is used to study the dynamics of growth of a black hole and its host galaxy, and has uncovered new relationships between the galaxy's global characteristics. Scientists are specifically interested in finding optimal parameter choices for this model that best match physical measurements along with quantifying the uncertainty of such procedures. This renders a statistical calibration problem with multivariate outcomes. In this article, we develop a Bayesian method for calibration with \emph{multivariate outcomes} using orthogonal bias functions thus ensuring parameter identifiability. Our approach is based on projecting the posterior to an appropriate space which allows the user to choose any nonparametric prior on the bias function(s) instead of having to model it (them) with Gaussian processes. We develop a functional projection approach using the theory of Hilbert spaces. A finite-dimensional analogue of the projection problem is also considered. We illustrate the proposed approach using a BART prior and apply it to calibrate the Schwarzschild model illustrating how a multivariate approach may resolve discrepancies resulting from a univariate calibration. 
\end{abstract}

\noindent%
{\it Keywords:}  Bayesian; Computer model; Emulator; Scientific modeling; Regression trees.
\vfill

\newpage
\spacingset{1.95} 
\section{Introduction}
\label{sec:intro}
We focus on the problem of computer code calibration motivated by an application in astrophysics that involves the orbital superposition method due to \cite{schwarzschild1979numerical}. These orbit-based models are the main method for measuring the masses of supermassive black holes at the centers of nearby galaxies \citep{kormendyho2013}, and have led to the establishment of empirical relationships between the masses of black holes and large-scale galaxy properties, which surprisingly suggest that black holes and their host galaxies grow and evolve together over time \citep{mcconnell2013, kormendyho2013, saglia2016}. 

In their seminal work \cite{kennedy2001bayesian}, the authors proposed a framework that simultaneously models the \change[ab]{observed data of}{data observed from} a physical process and data generated from computer code implementations of mathematical models \change[ab]{of}{emulating} the physical process. The fundamental contribution of their work was to include a discrepancy/bias function in their model, which the authors interpreted as an unknown function that captures the inability of the mathematical model to explain variations in the observed data. A Bayesian nonparametric regression approach based on Gaussian processes \citep{Rasmussen:2005:GPM:1162254} was then used to model the unknown bias function as well as the computer code \change[ab]{that simulates}{simulating} the physical process. This enabled fast {\it emulation} of the computer code simulators and produced rapid estimates of input parameters of the computer code (called calibration parameters) that best approximated the observed data. Since then several authors have proposed statistical methods for computer code calibration; see for example \cite{higdon2008computer, bayarri2007computer, chakraborty2013spline, pratola2016bayesian} among others.


\cite{tuo2015efficient, tuo2016theoretical} studied the general calibration model of \cite{kennedy2001bayesian} from a theoretical point-of-view and showed that the calibration parameters are not identifiable. \note[ab]{\cred ** Is this all they did? We should add if they suggested a remedy, like the next line. Also, this was one giant line; I have added some stops to break it into multiple lines. -- AB(1/19).} \add[ac]{As a remedy, \cite{tuo2015efficient} } defined the calibration parameter as the point in the input parameter space that minimizes the $L^2$ loss function between the physical process and the computer code. \add[ac]{\cite{plumlee2017bayesian} extended it to more general losses} and showed under certain conditions, imposing an orthogonality constraint on the bias function reduces uncertainty in the calibration parameter. The author also proposed a modified covariance function for Gaussian process priors used for the bias function. \cite{gu2018scaled} proposed a scaled-Gaussian process in which the authors constrain the ``size" of the bias function; \cite{xie2021bayesian} interpreted the calibration parameters as functional of the bias function modeled using a Gaussian process. However, as indicated by other authors \citep{pratola2016bayesian} Gaussian process priors often suffer from scalability issues to big data sets and high-dimensional input spaces. Moreover, it is not immediate how to modify the orthogonality condition in the presence of multiple outputs of the physical process and the computer code. 

In this article, we propose a generalized Bayesian approach for computer code calibration with multivariate output which takes into account identifiability issues pointed out by \cite{tuo2016theoretical}. We first generalize the restrictions on the bias function to the situation where there are multiple potentially correlated outcomes compared to the univariate case addressed in \cite{plumlee2017bayesian}. We then take a posterior projection approach wherein one is free to specify any nonparametric prior on the bias function \emph{apriori}, and then the posterior is projected onto the relevant space. This allows more flexible nonparametric prior specification such as BART \citep{chipman2010bart}, and the projection automatically takes care of the constraints built into the model. \add[ac]{For inference based on the projected posterior we develop two algorithms for posterior sampling. The first one is essentially a Hilbert projection on the space of square-integrable functions and the second one borrows inspiration from a projection interpretation of the multivariate Gaussian distribution. Numerical experiments in Section \ref{sec:simulations} show superiority of the proposed method both in terms of computational scalability and modeling flexibility.}

\note[ab]{\cred ** Paragraph change. -- AB(1/19)} 

\note[ac]{** commented out the paragraph here and moved its contents to the previous paragraph}

The rest of the paper is organized as follows. In Section \ref{sec:preliminary}, we review the basic concepts of orthogonal calibration. Next, in Section \ref{sec:posterior_projection}, we introduce our proposed posterior projection method. Section \ref{sec:unknown_computer_model} discusses extensions to cases when the computer model is not available explicitly. Numerical experiments comparing the proposed method to existing approaches are given in Section \ref{sec:simulations}. We end with an application to calibrating the Schwarzschild's model in Section \ref{sec:schwarzchild_model}.


\section{Orthogonal calibration of computer models}\label{sec:preliminary}
\add[ab]{We begin with a quick review of orthogonal calibration for the case when the observed process is univariate.} Suppose we are interested in a real physical system $y_R(\cdot)\in \mathbb{R}$ with controllable inputs $x \in \mathcal{X}$ where $\mathcal{X}$ is some \change[ab]{open}{closed}, bounded subset of $\mathbb{R}^p$. We do not directly observe the system. Instead, we observe stochastic field observations $y_F(\cdot)$ \add[ab]{at $n$ input points called the design $\mathcal{D}=\{x_1, \ldots, x_n\}$,} for which \change[ab]{we assume the following model}{a natural stochastic model is}
    $y_F(x_i) = y_R(x_i) + \epsilon(x_i), \, \epsilon(x_i) \overset{ind.} \sim \Gauss(0, \sigma^2_F), \, i = 1, \ldots, n.$
\remove[ab]{Consider available observations on $n$ input points called the design $\mathcal{D}=\{x_1, \ldots, x_n\}$.} \change[ab]{In}{Suppose in} addition to the field observations, we also have a simulator \add[ab]{$f: \mathcal{X} \times \Theta \to \mathbb{R}$}  for the physical system \add[ab]{(depending on the controllable input $x$ as well as other parameters $t \in \Theta \subset \mathbb{R}^p$)} which typically comes in the form of \add[ab]{a} computer code/model. For now, we assume that the functional form of the computer model $f(x, t)$ is explicitly known for every $x \in \mathcal{X}, t \in \Theta$ and defer the extension of the proposed method to unavailable $f$ later. \remove[ab]{The simulator depends on the controllable inputs along with other parameters $t \in \Theta \subset \mathbb{R}^p$. Let $f(x, t) \in \mathbb{R}^q$ denote the computer model.} The parameter $t$ represents our understanding of the physical \change[ab]{system in terms of several of its attributes. We hope}{system, and we hope} that for some unknown \change[ab]{$\theta \in \Theta$}{$\theta^* \in \Theta$}, the computer model \add[ab]{closely} approximates the physical process $y_R(\cdot)$. \remove[ab]{really well.}Formally, suppose $L\{y_R(\cdot), f(\cdot, \cdot)\}$ is a loss function that can distinguish between the best possible parameter \change[ab]{$\theta$}{$\theta^*$} and all other parameter values $t$. {\color{black} One such example of a loss function is the squared error loss: $L\{y_R(\cdot), f(\cdot, t)\} = \int_{\mathcal{X}} (y_R(x) - f(x, t))^2 dx$.}
If there exists a unique $t^*$ such that $y_R(x) = f(x, t^*)$ for all $x \in \mathcal{X}$ and the loss $L$ is {\em strictly consistent} following \cite{gneiting2011making}, then \change[ab]{$\theta = t^*$}{$\theta^* = t^*$}. However, existence of such a $t^*$ is not always guaranteed for most practical computer models and loss functions. In the absence of such a parameter, the computer model is biased and \change[ab]{$\theta$}{$\theta^*$} is defined as the parameter combination at which the loss $L\{y_R(\cdot), f(\cdot, \cdot)\}$ is minimized, \add[ac]{ i.e. $\theta^* = \argmin_{t \in \Theta}L\{y_R(\cdot), f(\cdot, t)\}$}. \add[ab]{To account for such bias,} \cite{kennedy2001bayesian} \remove[ab]{in a seminal work,} proposed the following model for the observed data on the system 
\begin{equation}\label{eq:kennedy_ohagan}
    y_F(x_i) = f(x_i, \theta) + b_{\theta}(x_i) + \epsilon(x_i), \, \epsilon(x_i) \overset{ind.} \sim \Gauss(0, \sigma^2_F), \, i = 1, \ldots, n,
\end{equation}
where $\theta$ \change[ab]{is the unknown parameter that minimizes the loss discussed above}{represents the model parameter that targets the population parameter $\theta^*$ defined above}, and $b_{\theta}(x)$ is interpreted as the discrepancy or bias between the physical process and the assumed computer model, i.e. $b_{\theta}(x) = y_R(x) - f(x, \theta) $. \cite{kennedy2001bayesian}  considered Gaussian process priors \citep{Rasmussen:2005:GPM:1162254} as a prior distribution over the unknown function $b_{\theta}(x)$ together with a Normal/Uniform prior over the parameter $\theta$.  \remove[ac]{Refer to \cite{plumlee2017bayesian} for a more detailed definition of the parameter $\theta$ and the bias function $b_{\theta}(\cdot)$.} \note[ab]{\cred ** Do we need the previous line? I suggest deleting it but wanted to double check. -- AB(1/20)} Suppose the prior mean function of $b_\theta$ is 0 and the covariance function is $C(\cdot, \cdot)$. Letting $\pi(\theta)$ to be the prior over $\theta$, with $n$ observations on $y_F(\cdot)$ on the design $\mathcal{D}$, the posterior distribution of $(\theta, \mathbf{b}), \, \mathbf{b} = \{b_{\theta}(x_1), \ldots, b_{\theta}(x_n)\}^\T$ can be obtained using Bayes theorem as
    $\Pi(\theta, \mathbf{b}\mid y^{(n)}) \propto \ell(y^{(n)}\mid \theta, \mathbf{b}) \Pi(\theta) \Pi(\mathbf{b})$,
where given $\theta$, $y^{(n)} = \{y_F(x_1) - f(x_1, \theta), \ldots, y_F(x_n) - f(x_n, \theta)\}^\T$ and $\ell(y^{(n)}\mid \theta, \mathbf{b})$ is the likelihood corresponding to a Gaussian distribution with mean $\mathbf{b}$ and covariance matrix $\sigma^2_F \mathrm{I}_n$. The marginal posterior distribution of $\theta$, $\Pi(\theta \mid y^{(n)})$ can be obtained by observing that $y^{(n)} \mid \theta \sim \Gauss(0, K+ \sigma_F^2\mathrm{I}_n)$ with $K_{ij} = C(x_i, x_j), \, i, j = 1, \ldots, n$. 

\cite{tuo2016theoretical} proved that vanilla Gaussian process prior over the bias $b_{\theta}(\cdot)$ may lead to inaccurate inference on \change[ab]{$\theta$}{$\theta^*$}, defined as the minimizer of the loss function discussed above, even when observations are available for a large number of points $n$. This is due to the fact that without added restrictions on $b_\theta$, $\theta$ is not identifiable in \eqref{eq:kennedy_ohagan}. 

{\color{black} In light of this, \cite{plumlee2017bayesian} argued to impose a restriction on the bias function induced by the stationarity of the loss at $\theta$.
Let $g(x, \theta) = \frac{\partial f(x, \theta)}{\partial \theta}$ denote the partial derivative (assuming its existence) and suppose the loss under consideration is the squared error loss. Let us also assume regularity conditions that allow interchanging differentiation and integration. The condition given by \cite{plumlee2017bayesian} is
\begin{equation}\label{eq:plumee_orthogonality}
    \int_\mathcal{X} g(x, \theta) b_\theta(x) dx = 0.
\end{equation}
In other words, the bias function must be orthogonal to the first derivative of the computer model with respect to $\theta$. To ensure this condition is satisfied by $b_\theta$, \cite{plumlee2017bayesian} suggested a modified covariance kernel for $b_{\theta}(\cdot)$ such that realizations of such a Gaussian process prior satisfy \eqref{eq:plumee_orthogonality} almost surely. This resolution, while extremely beneficial, raises two questions.
First, how should one extend the orthogonality condition in the presence of more than one outcome. Second, and more importantly, is there an automatic way to incorporate this constraint while working with any nonparametric prior on the bias function. In the next section, we first extend the orthogonality condition to multivariate outcomes and then introduce the method of posterior projection, which is agnostic to the choice of the prior distribution on $b_\theta$.
}

\section{Orthogonal calibration via posterior projection} \label{sec:posterior_projection}
{\color{black} Consider the general case of multivariate outcomes of the process $y_R(\cdot) \in \mathbb{R}^q$. The computer model $f: \mathcal{X} \times \Theta \to \mathbb{R}^q$, where $q \geq 1$. The multivariate version of \eqref{eq:kennedy_ohagan} is 
\begin{equation}\label{eq:multivariate_kennedy_ohagan}
    y_F(x_i) = f(x_i, \theta) + b_{\theta}(x_i) + \epsilon(x_i), \, \epsilon(x_i) \overset{ind.} \sim \Gauss_q(0, \Sigma_F), \, i = 1, \ldots, n,
\end{equation}
where $\Sigma_F^{q \times q}$ is the error covariance matrix and $b_\theta(x) = (b_{\theta, 1}(x), \ldots, b_{\theta, q}(x))$. Here, $b_{\theta, k}$ is the bias of the computer model for the $k$-th process $y_{R, k}(\cdot)$, $k = 1, \ldots, q$.   
}
As mentioned earlier, the modified covariance kernel by \cite{plumlee2017bayesian} seemingly resolves idenitifiability issue of $\theta$ for $q=1$, but it hinges on the assumption that the user chooses to model $b_{\theta}(x)$ with a Gaussian process. This leaves out a plethora of other nonparametric priors for functions that have been widely popular in the literature, for example Bayesian additive regression trees (BART) \citep{chipman2010bart}, Bayesian multivariate adaptive regression splines \citep{denison1998bayesian}, \add[ac]{Bayesian neural networks \citep{neal2012bayesian}} \note[ab]{\cred ** add a Bayes neural net cite?} etc. 
Furthermore, these methods have been successfully applied in computer code calibration problems \citep{pratola2016bayesian, higdon2008computer, chakraborty2013spline}. Many of these priors have been shown to have optimal theoretical properties along with sufficient computational scalability \citep{rovckova2020posterior, linero2018bayesian}. In this article, our aim is to develop a procedure where a user is free to choose any nonparametric prior for $b_{\theta}(x)$ and the procedure automatically takes care of orthogonality constraints discussed above. 

Let $L^2_q(\mathcal{X})$ be the space of $q$-dimensional square-integrable functions on $\mathcal{X}$ equipped with the inner-product $\langle f_1, f_2 \rangle = \sum_{k=1}^q \int f_{1,k}(x)f_{2,k}(x) dx $.  Then $L^2_q(\mathcal{X})$ is a tensor Hilbert space with norm $\norm{\cdot}_{L^2_q}$ induced by the inner product: $\norm{f}_{L^2_q} = \langle f, f\rangle = \sum_{k=1}^q \int f_{k}^2(x) dx $. Define for every $t \in \Theta$,  $g_{j,k}(x, t) = \frac{\partial}{\partial t_j} f_{j,k}(x, t)$, for $j =1, \ldots, p$ and $k = 1, \ldots, q$.
We assume that for every $t \in \Theta$,  $g_{j}(\cdot, t) = (g_{j,1}(\cdot, t), \ldots, g_{j,q}(\cdot, t))^\T \in L^2_q(\mathcal{X})$ for all $j = 1, \ldots, p$ and the corresponding bias \remove[ab]{is} $b_t(\cdot) \in L^2_q(\mathcal{X}) $.  {\color{black} Within this context, we define $\theta^\star$ as
$$\theta^\star = \argmin_{t \in \Theta} L\{y_R(\cdot), f(\cdot, t)\} = \argmin_{t \in \Theta}\sum_{k=1}^q \int_\mathcal{X} \left[ y_{R, k}(x) - f_k(x, t)\right]^2  dx,$$
that is, the loss function under consideration is the $L^2_q$-loss and the target parameter is the one which minimizes this loss. By definition, $\theta^\star$ is a stationary point of $L\{y_R(\cdot), f(\cdot, t)\}$. Suppose $g_j(\cdot, t)$ is bounded for all $t \in \Theta$. This allows interchanging a differentiation and an integral.
In the next Proposition, we provide an alternative characterization of this stationarity condition in terms of the bias function extending the result of \cite{plumlee2017bayesian} to multivariate outcomes.
\begin{proposition}\label{prop:multivariate_conditions}
    Consider the loss $L\{y_R(\cdot), f(\cdot, t)\}$ defined above. Recall $g_{j,k}(x, t) = \frac{\partial}{\partial t_j} f_k(x, t)$. An equivalent condition for $\frac{\partial}{\partial t} L\{y_R(\cdot), f(\cdot, t)\}\rvert_{t =\theta^*} = 0$ is     \begin{equation}\label{eq:orthogonal_bias}
    \int_{\mathcal{X}} \sum_{k=1}^ q g_{j,k}(x, \theta^*) b_{\theta^*, k}(x)dx = 0, \quad \text{for all } j = 1, \ldots, p.
    \end{equation}
\end{proposition}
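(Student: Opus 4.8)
The plan is to differentiate the loss \eqref{eq:loss_function} directly, component by component, and match the result against the left-hand side of \eqref{eq:orthogonal_bias}. First I would fix $j \in \{1,\dots,p\}$ and write the $j$-th partial derivative of $L$ as a sum over the $q$ output coordinates, $\frac{\partial}{\partial t_j} L\{y_R(\cdot), f(\cdot,t)\} = \sum_{k=1}^q \frac{\partial}{\partial t_j} \int_{\mathcal{X}} \{y_{R,k}(x) - f_k(x,t)\}^2\, dx$. Assuming the derivative may be taken inside the integral, the chain rule together with the definition $g_{j,k}(x,t) = \frac{\partial}{\partial t_j} f_k(x,t)$ gives $\frac{\partial}{\partial t_j} L = -2 \sum_{k=1}^q \int_{\mathcal{X}} \{y_{R,k}(x) - f_k(x,t)\}\, g_{j,k}(x,t)\, dx$.

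Next I would evaluate at $t = \theta^*$. By the definition of the bias in \eqref{eq:kennedy_ohagan}, namely $b_{\theta^*,k}(x) = y_{R,k}(x) - f_k(x,\theta^*)$, this becomes $\frac{\partial}{\partial t_j} L\rvert_{t=\theta^*} = -2 \int_{\mathcal{X}} \sum_{k=1}^q g_{j,k}(x,\theta^*)\, b_{\theta^*,k}(x)\, dx$. Under the hypothesis \eqref{eq:orthogonal_bias} the integral on the right vanishes, so $\frac{\partial}{\partial t_j} L\rvert_{t=\theta^*} = 0$; since $j$ was arbitrary, every coordinate of the gradient vanishes, i.e. $\frac{\partial}{\partial t} L\{y_R(\cdot), f(\cdot,t)\}\rvert_{t=\theta^*} = 0$, which is exactly the assertion.

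The only non-routine point is the interchange of differentiation and integration in the first step. I would handle this by invoking standard regularity already implicit in the setup: $\mathcal{X}$ is closed and bounded; the partial derivatives $g_{j,k}(\cdot,t)$ are assumed to exist and may be taken to be jointly continuous (hence uniformly bounded) on $\mathcal{X}$ times a compact neighborhood of $\theta^*$; and $y_{R,k}(\cdot)$, $f_k(\cdot,t)$ lie in $L^2(\mathcal{X})$. These hypotheses produce an integrable dominating function for the difference quotients $\{y_{R,k} - f_k(\cdot,t')\}^2 - \{y_{R,k} - f_k(\cdot,t)\}^2$ over $t'$ near $t$, so dominated convergence (equivalently, the Leibniz integral rule) applies. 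I expect this to be the main — indeed essentially the only — obstacle in an otherwise short computation.

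Finally, it is worth noting in the proof that the condition is only \emph{sufficient}: it enforces first-order stationarity of $L$ at $\theta^*$, not that $\theta^*$ is the global minimizer, and when $q=1$ it collapses to the orthogonality condition of \cite{plumlee2017bayesian}, so the proposition is genuinely a multivariate extension of that result.
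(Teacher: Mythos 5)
Your proof is correct and follows essentially the same route as the paper: differentiate the loss under the integral sign, substitute $b_{\theta^*,k}(x) = y_{R,k}(x) - f_k(x,\theta^*)$, and observe that the $j$-th component of the gradient equals $-2\int_{\mathcal{X}}\sum_{k=1}^q g_{j,k}(x,\theta^*)\,b_{\theta^*,k}(x)\,dx$. The paper's proof is just a terser version of this computation; your added care about justifying the interchange of differentiation and integration is a reasonable (and harmless) elaboration of a step the paper takes for granted.
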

}
{\color{black} The condition reduces to \eqref{eq:plumee_orthogonality} when $q = 1$. Given this condition, the space of possible bias functions reduces to ones that satisfy \eqref{eq:orthogonal_bias}. Naturally, Bayesian inference can proceed by eliciting a prior distribution on bias function that satisfy this stationarity condition. The key issue is that $\theta^\star$ is unknown. Suppose $\tilde{\theta}$ is our guess about the parameter combination at which the computer model best approximates the system given the observed data and the computer model. A data-dependent choice of $\tilde{\theta}$ involves minimizing the corresponding empirical risk.
}
\begin{equation}
    \tilde{\theta}  = \argmin_{t \in \Theta} \frac{1}{nq}\sum_{k=1}^ q\sum_{i=1}^{n} \{y_{F,k}(x_i) - f_k(x_i, t)\}^2.
\end{equation}
Validity of this definition is provided in the following result where we establish $\theta^*$ as the minimizer of population risk and then leverage on empirical risk minimization results to show that $\tilde{\theta}$ converges to $\theta^*$ for large $n$. 
\begin{proposition}\label{prop:data_est}
Let $P = P_{y\mid x} P_x$ denote the data generating distribution where $P_x$ is the uniform measure on $\mathcal{X}$. Suppose that $\Theta$ is compact and that the $k$-th computer model is Lipshitz in $t \in \Theta$ uniformly over $x \in \mathcal{X}$, that is, there exists $L_k>0$ such that $|f_k(x, t_1) - f_k(x, t_2)|\leq L_k \norm{t_1 - t_2}_2^2$ for $k = 1,\ldots, q$. We also assume that the population risk $L\{y_R(\cdot), f(\cdot, t)\}$ has a unique minimizer that is well-separated, namely $\argmin_{t \in \Theta}\\ L\{y_R(\cdot), f(\cdot, t)\}$ is uniquely minimized at $\theta^\star$
and for any $\epsilon > 0, \inf_{t: ||t -\theta^\star||\geq \epsilon} L\{y_R(\cdot), f(\cdot, t)\} >  L\{y_R(\cdot), f(\cdot, \theta^\star)\} $.
Then $\tilde{\theta} \overset{P}{\to} \theta^*$. \note[ab]{\cred ** Again, should this be $t^*$? Also, better to call it $\theta^*$ throughout. The current notation can be rather confusing. Update: should be $\theta^*$. -- AB (1/30)}
\end{proposition}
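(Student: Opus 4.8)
This is a standard consistency statement for an $M$-estimator, and I would organize the proof around three steps: (i) identify the population criterion and show $\theta^*$ is its well-separated minimizer, (ii) establish a uniform law of large numbers over $\Theta$, and (iii) combine the two by the usual $\argmin$ argument. For (i), write $R_n(t)=\frac{1}{nq}\sum_{k=1}^q\sum_{i=1}^n\{y_{F,k}(x_i)-f_k(x_i,t)\}^2$ for the empirical criterion and $R(t)=E_P\big[\frac1q\sum_{k=1}^q\{y_{F,k}(x)-f_k(x,t)\}^2\big]$ for its population counterpart. Since $y_{F,k}(x)=y_{R,k}(x)+\epsilon_k(x)$ with $\epsilon(x)$ mean zero and independent of $x$, conditioning on $x$ and integrating against $P_x$ (the uniform probability measure on $\mathcal{X}$) gives $R(t)=\frac{1}{q\,|\mathcal{X}|}L\{y_R(\cdot),f(\cdot,t)\}+c_0$, where $c_0$ is the average of the per-coordinate noise variances and does not depend on $t$; thus $R$ equals the loss \eqref{eq:loss_function} up to a positive multiplicative constant and an additive constant, and $\argmin_{t\in\Theta}R(t)=\argmin_{t\in\Theta}L\{y_R(\cdot),f(\cdot,t)\}=\theta^*$. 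The Lipschitz assumption (which on the bounded set $\Theta$ gives $|f_k(x,t_1)-f_k(x,t_2)|\le L_k\,\mathrm{diam}(\Theta)\norm{t_1-t_2}_2$, the form I use) together with dominated convergence makes $t\mapsto R(t)$ continuous on the compact $\Theta$; combined with uniqueness of the minimizer --- implicit in the $\argmin$ notation --- this yields well-separatedness, $\inf\{R(t):\norm{t-\theta^*}_2\ge\varepsilon\}>R(\theta^*)$ for every $\varepsilon>0$.

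For step (ii) I would show $\sup_{t\in\Theta}|R_n(t)-R(t)|\overset{P}{\to}0$. Writing $m_t(x,y)=\frac1q\sum_{k=1}^q\{y_k-f_k(x,t)\}^2$, the class $\{m_t:t\in\Theta\}$ admits the $P$-integrable envelope $M(x,y)=\frac{C}{q}\sum_{k=1}^q\big(\epsilon_k(x)^2+y_{R,k}(x)^2+f_k(x,\theta^*)^2+1\big)$ for a suitable constant $C$ --- integrability holding because each $\epsilon_k$ has finite variance and, as is implicit in the setup, $y_{R,k},\,f_k(\cdot,t)\in L^2(\mathcal{X})$. For each fixed $t$ the strong law gives $R_n(t)\to R(t)$ almost surely, and the Lipschitz bound gives $|m_t(x,y)-m_{t'}(x,y)|\le C'\big(M(x,y)^{1/2}+1\big)\norm{t-t'}_2$, so the maps $t\mapsto R_n(t)$ and $t\mapsto R(t)$ are equicontinuous (the former with a random Lipschitz constant that converges almost surely by the strong law applied to $M^{1/2}$). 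Covering the compact $\Theta$ by finitely many small balls and combining pointwise convergence at the centers with this equicontinuity upgrades pointwise to uniform convergence --- the standard Glivenko--Cantelli argument for a class indexed by a compact parameter with Lipschitz dependence and integrable envelope.

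Step (iii) is then routine. Since $\tilde\theta$ minimizes $R_n$, $R_n(\tilde\theta)\le R_n(\theta^*)$; adding and subtracting $R$ and using the uniform convergence, $R(\tilde\theta)\le R_n(\tilde\theta)+o_P(1)\le R_n(\theta^*)+o_P(1)=R(\theta^*)+o_P(1)$. Fixing $\varepsilon>0$ and setting $\delta=\inf\{R(t):\norm{t-\theta^*}_2\ge\varepsilon\}-R(\theta^*)>0$, on the event $\{\norm{\tilde\theta-\theta^*}_2\ge\varepsilon\}$ we have $R(\tilde\theta)-R(\theta^*)\ge\delta$, whence $P(\norm{\tilde\theta-\theta^*}_2\ge\varepsilon)\le P\big(R(\tilde\theta)-R(\theta^*)\ge\delta\big)\to0$, i.e.\ $\tilde\theta\overset{P}{\to}\theta^*$.

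The main obstacle is the uniform law of large numbers in step (ii): the field observations carry Gaussian, hence unbounded, noise, so the criterion functions $m_t$ are not uniformly bounded and the most elementary uniform LLN does not apply. The fix is the envelope-plus-equicontinuity argument above --- the envelope stays $P$-integrable and the Lipschitz assumption controls the oscillation of $R_n-R$ across $\Theta$ --- after which the uniform LLN is standard. Two minor points deserve a sentence in the write-up: the implicit switch to an asymptotic regime with i.i.d.\ design points $x_i\sim P_x$ (as opposed to the fixed finite design used elsewhere in the paper), and reading the stated Lipschitz condition in the equivalent linear form used above.
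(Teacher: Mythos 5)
Your proof follows essentially the same route as the paper's: the same bias--variance decomposition showing the population risk equals the $L^2$ loss in \eqref{eq:loss_function} up to the additive constant $\tr(\Sigma_F)$ and a volume normalization, followed by a uniform law of large numbers over the compact $\Theta$ and the standard argmin-consistency argument. The only differences are that the paper invokes Pollard's uniform LLN where you prove it by hand via the envelope-plus-equicontinuity covering argument, and that you make explicit the well-separation step needed to pass from uniform convergence of the criterion to convergence of its minimizer --- a step the paper's proof asserts without detail.
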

\noindent {\color{black} When $P_x$ is not uniform over $\mathcal{X}$, the definition of the loss should be changed accordingly, e.g. $L\{y_R(\cdot), f(\cdot, t)\} = \int \{y_R(x) - f(x, t)\}^2 dP_x$. In that case, the empirical risk minimization should be carried over the empirical distribution $P_{n,x}$ instead. This essentially leads to a weighted least-squares problem without changing the conclusion of the above result.} The compactness of $\Theta$ \change[ab]{although appears to be a technical condition, it is often practical in many calibration problems}{is a technical assumption which nevertheless is often practical in many calibration problems}. {\color{black} Since we fix $\tilde{\theta}$ before eliciting prior distributions on $(\theta, b)$, our approach is not entirely Bayesian in the classical sense, but is of a modular Bayes nature \citep{bayarri2007framework}.}


Having defined the point at which we want the bias function $b(x)$ to satisfy the orthogonality condition \eqref{eq:orthogonal_bias}, we now proceed to define the projection posterior. {\color{black} Define $\mathcal{F}_{\tilde{\theta}} = \{b_{\tilde{\theta}}: \int \sum_{k=1}^q g_{j,k}(x,\tilde{\theta})b_{\tilde{\theta}, k}(x) dx = 0, j = 1, \ldots, p\}$ as the set of bias functions that satisfy the orthogonality condition at $\tilde{\theta}$.} Instead of defining a prior distribution on $b_{\tilde{\theta}} \in  \mathcal{F}_{\tilde{\theta}}$, our strategy is to start with a generic prior and then project the posterior distribution to $\mathcal{F}_{\tilde{\theta}}$. 
Standard Hilbert space theory implies that $\mathcal{F}_{\tilde{\theta}}$ is a non-empty, closed and convex subset of $L^2_q(\mathcal{X})$. Then by the Hilbert projection theorem \citep{tsiatis2006semiparametric}, there exists a unique projection $b_{\tilde{\theta}}^*$ of any $b_{\tilde{\theta}} \in L^2_q(\mathcal{X})$. Define the projection as 
 \begin{equation}\label{eq:proj_operator}
    T_{\mathcal{F}_{\tilde{\theta}}}(b_{\tilde{\theta}}) = \{b_{\tilde{\theta}}^* \in \mathcal{F}_{\tilde{\theta}} : \norm{b_{\tilde{\theta}}^* - b_{\tilde{\theta}}}_{L^2_q} = \inf_{b \in \mathcal{F}_{\tilde{\theta}}} \norm{b - b_{\tilde{\theta}}}_{L^2_q}\}.
\end{equation}


We now describe how $T_{{\mathcal{F}}_{\tilde{\theta}}}$ is used to define the projection posterior. Suppose a prior distribution $\Pi(\theta, b) = \Pi(\theta)\Pi(b)$ is elicited on $\Theta \times L^2_q(\mathcal{X})$. 
Let $\widetilde{B} = \widetilde{B}_1 \times \widetilde{B}_2$ be a measurable subset of the Borel $\sigma$-algebra of $\Theta \times {\mathcal{F}}_{\tilde{\theta}}$ . Then given $y^{(n)}$, we define the posterior probability $\Pi_{\text{proj}}(\widetilde{B}\mid y^{(n)})$ of $\widetilde{B}$ under the prior $\Pi(\theta, b)$ as $\Pi(B \mid y^{(n)})$ where $B = \widetilde{B}_1 \times  T_{\mathcal{F}_{\tilde{\theta}}}^{-1}(\widetilde{B}_2)$ where $T_{\widetilde{\mathcal{F}}_{\theta}}^{-1}(\widetilde{B}_2) = \{b_{\tilde{\theta}}: T_{\mathcal{F}_{\tilde{\theta}}}(b_{\theta}) \in \widetilde{B}_2 \}$, that is,
\begin{equation}\label{eq:posterior_projection}
    \Pi_{\text{proj}}(\widetilde{B}\mid y^{(n)}) = \Pi(B \mid y^{(n)}) = \dfrac{\int_{B} \ell(y^{(n)}\mid \theta, b) d\Pi(\theta, b)}{\int \ell(y^{(n)}\mid \theta, b) d\Pi(\theta, b)}.
\end{equation}
Two remarks are in order here.
{ \color{black} 
\begin{rem}
    We note here that our approach is significantly different from a related projection based method by \cite{xie2021bayesian} where the authors consider a Gaussian process prior on the bias function. The calibration parameter, is then treated as a functional of the bias function. Hence, a prior is induced on the calibration parameter through the Gaussian process prior on the bias function. Although this serves the statistical issue of identifiability, the user loses control over the prior specification for the calibration parameters.
\end{rem}
\begin{rem}
   The key benefit of the proposed approach is that it enables decoupling the inference on the bias function and the calibration parameter even under the orthogonality constraint. This in turn allows flexible prior specification on both parameters. In particular, the user is free to use a broader class of non-stationary priors for the bias function, such as BMARS, BART, etc. In contrast, the approach of \cite{plumlee2017bayesian} allows non-stationarity in a very specific manner.
\end{rem}
}
Measurability of $T$ is guaranteed since $\mathcal{F}_{\tilde{\theta}}$ is non-empty, closed and convex; see also \cite{sen2018constrained}. 
Although \eqref{eq:proj_operator} is defined as a solution to an optimization problem, in this particular case, it has an explicit form which we call functional projection.
\begin{lemma}\label{lm:projection_formula}
Fix $b \in L^2_q(\mathcal{X})$ and let $\tilde{\theta}$ be defined as above. Let the functions $g_{j,k}(x),\, j = 1, \ldots, p$ satisfy the following: $\sum_{j=1}^p \alpha_j g_{j,k}(x) = 0$ for all $x \in \mathcal{X}$ and for all $k = 1, \ldots, q$ iff $\alpha_j =0, \, j = 1, \ldots, p$. Then $T_{\mathcal{F}_{\tilde{\theta}}}(b) = b^*(x) = b(x) - \sum_{j=1}^p \lambda_j^* g_j(x)$ where the vector $\lambda = (\lambda_1^*, \ldots,\lambda_p^*)^\T$ satisfies $Q\lambda = \eta$ with $\eta = (\sum_{k=1}^q\langle  b_k, g_{1,k}\rangle, \ldots, \sum_{k=1}^q \langle  b_k ,g_{p,k} \rangle)^\T$ and $Q$ is a $p \times p$ matrix with elements $Q_{jj'} = \sum_{k=1}^q \langle g_{j,k}, g_{j',k} \rangle$.
\end{lemma}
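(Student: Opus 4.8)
The plan is to recognize $\mathcal{F}_{\tilde{\theta}}$ as the orthogonal complement in $L^2_q(\mathcal{X})$ of the finite-dimensional subspace $V = \mathrm{span}\{g_1(\cdot,\tilde{\theta}),\ldots,g_p(\cdot,\tilde{\theta})\}$, and then reduce the constrained projection to the familiar least-squares normal equations. First I would rewrite the defining constraint: writing $\langle b, g_j\rangle := \sum_{k=1}^q \langle b_k, g_{j,k}\rangle$ for the tensor inner product, membership $b \in \mathcal{F}_{\tilde{\theta}}$ is exactly $\langle b, g_j\rangle = 0$ for $j = 1,\ldots,p$, i.e.\ $\mathcal{F}_{\tilde{\theta}} = V^{\perp}$, which is in particular a closed linear subspace. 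Since $V$ is finite-dimensional it is closed, so $L^2_q(\mathcal{X}) = V \oplus V^{\perp}$ and the orthogonal projection onto $\mathcal{F}_{\tilde{\theta}}$ is $I - P_V$, where $P_V$ denotes orthogonal projection onto $V$.

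Next I would compute $P_V b$. The hypothesis on the $g_{j,k}$ --- that $\sum_{j} \alpha_j g_{j,k}(x) = 0$ for all $x$ and all $k$ forces $\alpha = 0$ --- is precisely linear independence of $g_1,\ldots,g_p$ in $L^2_q(\mathcal{X})$, so $P_V b$ admits a unique representation $P_V b = \sum_{j=1}^p \lambda_j^* g_j$. The coefficient vector is characterized by $b - P_V b \perp g_j$ for every $j$: taking inner products yields $\sum_{j'=1}^p \lambda_{j'}^* \langle g_{j'}, g_j\rangle = \langle b, g_j\rangle$, that is $Q\lambda^* = \eta$ with $Q_{jj'} = \sum_{k=1}^q \langle g_{j,k}, g_{j',k}\rangle$ and $\eta_j = \sum_{k=1}^q \langle b_k, g_{j,k}\rangle$, exactly the matrix and vector in the statement. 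The Gram matrix $Q$ of the linearly independent family $\{g_j\}$ is symmetric positive definite, hence invertible, so $\lambda^* = Q^{-1}\eta$ is the unique solution.

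Finally I would verify the two facts that make $b^* := b - \sum_{j=1}^p \lambda_j^* g_j$ the asserted projection. Feasibility: $\langle b^*, g_j\rangle = \eta_j - (Q\lambda^*)_j = 0$, so $b^* \in \mathcal{F}_{\tilde{\theta}}$. Optimality: for any $b' \in \mathcal{F}_{\tilde{\theta}}$ we have $b - b^* = \sum_{j=1}^p \lambda_j^* g_j \in V$ while $b^* - b' \in V^{\perp}$, so the Pythagorean identity gives $\norm{b - b'}_{L^2_q}^2 = \norm{b - b^*}_{L^2_q}^2 + \norm{b^* - b'}_{L^2_q}^2 \geq \norm{b - b^*}_{L^2_q}^2$, with equality iff $b' = b^*$; this matches the definition of $T_{\mathcal{F}_{\tilde{\theta}}}$ in \eqref{eq:proj_operator}.

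There is no genuine obstacle here --- the result is the standard computation of the orthogonal projection onto an intersection of hyperplanes --- so the only care needed is bookkeeping: holding $\tilde{\theta}$ fixed throughout so that the $g_j$ are fixed elements of $L^2_q(\mathcal{X})$, using the tensor inner product $\langle\cdot,\cdot\rangle = \sum_k \langle\cdot_k,\cdot_k\rangle$ consistently, and noting explicitly that the stated non-degeneracy hypothesis is exactly what upgrades the normal equations to a uniquely solvable linear system via positive-definiteness of $Q$.
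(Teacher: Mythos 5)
Your proof is correct, and it reaches the same linear system as the paper by a different justification. The paper argues via Lagrange multipliers: it writes the constrained problem $\min\sum_k\norm{b_k-b_k^*}^2$ subject to $\sum_k\langle g_{j,k},b_k^*\rangle=0$, forms the Lagrangian, reads off the first-order condition $b_k^*=b_k+\sum_j\lambda_j g_{j,k}$, and substitutes back into the constraints to get $Q\lambda=-\eta$ (equivalent to your $Q\lambda^*=\eta$ after the sign flip in the parametrization $b^*=b-\sum_j\lambda_j^* g_j$). You instead identify $\mathcal{F}_{\tilde{\theta}}$ as $V^{\perp}$ for $V=\mathrm{span}\{g_1,\ldots,g_p\}$ and compute $I-P_V$ through the normal equations. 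The Gram-system computation at the core is identical, but your route buys two things the paper leaves implicit: first, an actual certification of optimality (feasibility of $b^*$ plus the Pythagorean inequality over $V\oplus V^{\perp}$), whereas a stationary point of the Lagrangian does not by itself establish that the candidate is the minimizer without an appeal to convexity; second, an explicit account of where the non-degeneracy hypothesis on the $g_{j,k}$ enters, namely to make $Q$ a positive-definite Gram matrix so that $Q\lambda=\eta$ has a unique solution --- the paper states this hypothesis in the lemma but never invokes it in its proof. So your write-up is, if anything, the more complete of the two.
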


{\color{black} In practice, implementing the functional projection requires evaluating the gram matrix $Q$ and $\eta$ which in turn involve evaluating integrals of the form $\langle g_{j,k}, g_{j', k}\rangle$ and $\langle b_k, g_{j,k}\rangle$. When the dimension of $\mathcal{X}$ is less than or equal to two, we use Gauss-Legendre quadrature to compute these quantities. Specifically, let $(x_t, w_t)_{t=1}^m$ denote the quadrature points and their associated weights, respectively. Then $\langle g_{j,k} , g_{j', k}\rangle \approx \sum_{t=1}^m g_{j,k}(x_t) g_{j', k}(x_t) w_t$. To compute $\langle b_k, g_{j,k} \rangle$ we use a similar strategy. The only difference in this case is that we use the predicted value of $b_{\tilde{\theta}}$ on $\{x_t\}_{t=1}^m$ to approximate $\langle b_k, g_{j,k} \rangle$ by $\sum_{t=1}^m b_k(x_t) g_{j', k}(x_t) w_t$. These predicted values are automatically obtained from the posterior predictive of $b_{\tilde{\theta}}$. Naturally, the number of quadrature points $m$ is a tuning parameter here. In our experiments, the choice $m = 15$ for $\mathcal{X} \subset \mathbb{R}$ yielded robust results. When the input dimension is higher than 2, quadrature rules require a large number of points. In such cases, we use Monte Carlo to approximate these integrals. For example, let $\mathcal{X} = [0,1]^5$. Then $\langle g_{j,k} , g_{j', k} \rangle= \int_{\mathcal{X}} g_{j,k}(x, \tilde{\theta}) g_{j', k}(x, \tilde{\theta}) dx = \mathbb{E}_{X \sim \pi_U} [g_{j,k}(X, \tilde{\theta})g_{j', k}(X, \tilde{\theta})]$, where $\pi_{U}$ is the product Uniform distribution over $[0,1]^5$. The corresponding estimate is  $M^{-1} \sum_{i=1}^M g_{j,k}(X_i, \tilde{\theta}) g_{j'k}(X_i, \tilde{\theta}) $ where $X_i \overset{iid}{\sim} \pi_U$ for $i = 1, \ldots, M$. The same technique can be used to implement the orthogonal calibration method with a modified covariance function \citep[Equation 6]{plumlee2017bayesian}. However, a key difference is that for this method, an integral over $\mathcal{X}^2$ is required to compute the covariance kernel. Moreover, to ensure positive definiteness of the kernel, an appropriately large number of Monte Carlo samples need to be selected which sharply increases the computational cost. We include a high-dimensional example in the numerical section to investigate the performance of the projection approach in such situations.}


Having defined the projection for our purpose, we can then devise an MCMC algorithm to sample from $\Pi_{\text{proj}}(\cdot \mid y^{(n)})$ defined in \eqref{eq:posterior_projection}.
\begin{algorithm}
\caption{Projection sampler 1}
1. Update $b_{\tilde{\theta}} \sim \Pi(b_{\tilde{\theta}} \mid \theta, y^{(n)})$

2. Project $b_{\tilde{\theta}}$ onto $\mathcal{F}_{\tilde{\theta}}$ following Lemma \ref{lm:projection_formula} to obtain $b^*_{\tilde{\theta}}$.

3. Update $\theta \sim \Pi(\theta \mid b^*_{\tilde{\theta}}, y^{(n)})$.

\label{algo:conditional_sampler}
\end{algorithm}

As an example, consider the case when $b_{\tilde{\theta}}$ is endowed with a zero mean Gaussian process prior with covariance kernel $C(\cdot, \cdot)$. Then following Algorithm \ref{algo:conditional_sampler}, we update $b_{\tilde{\theta}} \sim \Pi(b \mid \tilde{\theta}, y^{(n)})$ which is a multivariate Gaussian distribution. Next, we compute the corresponding projection $b^*_{\tilde{\theta}}$ using Lemma \ref{lm:projection_formula}. Finally, we sample $\theta \sim \Pi(\theta \mid b^*_{\tilde{\theta}}, y^{(n)})$. This strategy also works for other priors such as BART, where in the first step we update the parameters of the BART model using their respective full conditionals. The next steps are exactly the same.  An alternative finite-dimensional projection method is described in the supplementary document.

 For a full Bayesian inference on \eqref{eq:kennedy_ohagan} under the projection posterior framework described above, one should ideally elicit a prior $\Pi(\Sigma_F)$ on the unknown covariance matrix $\Sigma_F$. However, this often adds to computational challenges already present in a calibration problem and may not influence the uncertainty observed in $\theta$ - the central goal of these problems \citep{bayarri2007computer}.  As an alternative, a modular Bayes approach is considered here where we use a plug-in estimate $\hat{\Sigma}_F$. This estimate is constructed by fitting a nonparametric regression model, i.e. $y_{F,k}(x) = y_{R,k}(x) + \epsilon_k(x)$ to each outcome separately with idiosyncratic variance parameters $\sigma_{F,k}^2$ with conjugate Inverse Gamma priors and $y_{R,k} \sim \Pi_k$ for some nonparametric prior on $y_{R,k}$. We then set $\hat{\Sigma}_F = \text{Cov}(E)$ where $E$ is the error matrix with columns $E_{k} = y_{F,k} - \bar{y}_R$ where $\bar{y}_R$ is the posterior mean of predictions on the training set. Given $\hat{\Sigma}_F$, it can be assumed without loss of generality that $\epsilon(x) \sim \Gauss(0, \mathrm{I}_q)$ in \eqref{eq:kennedy_ohagan}. This also makes prior elicitation on the multivariate bias $b$ simpler; a natural choice is $\Pi(b) = \prod_{k=1}^q \Pi(b_k)$.

\section{Explicitly unavailable computer model}\label{sec:unknown_computer_model}
Until now we have assumed that the computer model $f(x,t)$ is either explicitly specified or can be evaluated cheaply using a code for every $(x,t) \in \mathcal{X} \times \Theta$. However, in \change[ab]{most}{many} calibration problems, including ours, this \add[ab]{is} not the case; the computer code \change[ab]{is}{can be} computationally very expensive; see also Section \ref{sec:schwarzchild_model}. \note[ab]{This could be a good place to remind/draw attention to Section 2, where you mention this.} The standard approach in the literature has been to approximate the computer model using a nonparametric method, typically the Gaussian process \citep{santner2003design}. While the original calibration framework developed by \cite{kennedy2001bayesian} modeled the observed data and the computer model simultaneously, a modular approach to inference has been advocated by many authors, e.g. \cite{bayarri2007framework, plumlee2017bayesian} wherein the modeling of the explicitly defined computer model is done separately from the modeling of the observed data. Here, we adopt this modular approach with one crucial difference with \cite{plumlee2017bayesian} where the author uses a probabilistic definition of the computer model using a Gaussian process. We, on the other hand, use a deterministic definition $\hat{f}$ of the computer model obtained using any nonparametric approximator such as the posterior mean of a BART fit, a neural net obtained by minimizing the least square error between code evaluations and the neural net output, or a random forest fit; such an approach was also adopted by \cite{xie2021bayesian}. Our motivation for doing so is to allow for more computational and modeling flexibility than a Gaussian process framework. Specifically, we work with the posterior mean of a BART fit which proved to be superior among all other choices of models for $f(x,t)$ in our numerical experiments as well as in our motivating example. We compared the squared error loss on a held-out test data of size $n_t = 50$ for these methods: $(n_t)^{-1} \sum_{i = 1}^{n_t} \{f(x, t_i) - \hat{f}(x, t_i)\}^2$ for each location $x$ in the observed data $x \in \mathcal{X}$ and each output of the computer model. In the best case scenario, the ratio of the squared error loss for the posterior predictive mean improved over the second best method (Random Forest) by a factor of almost 5. {\color{black} We also use this posterior mean to approximate the gradient of the computer model using the two point estimate $\frac{\partial}{\partial \theta} \hat{f}(x, \theta) \approx \{\hat{f}(x, \theta + h) - \hat{f}(x, \theta - h )\}/2h$ for small positive $h$.}

\section{Simulations}\label{sec:simulations}
{\color{black} In this section, we compare our method to the GP orthogonal calibration by \cite{plumlee2017bayesian} (OGP), the projected calibration (PCAL) method by \cite{xie2021bayesian}, and the scaled Gaussian process (S-GaSP) method by \cite{gu2018scaled} in several simulation experiments.} Since these methods were originally developed for one outcome, we consider the case $q =1$. 
We focus on each of these methods' ability to estimate $\theta^*$, uncertainty quantification, and their associated computing time. {\color{black} All reported computing times correspond to a 3.8 GHz 8-Core Intel Core i7 machine. } 
\subsection{Explicitly available computer model}
For generating the data, we consider {\color{black} three } situations when the definition of the computer model is available explicitly:
\begin{enumerate}
    \item {\bf Model 1:} $y_R(x) = 4x + x\sin 5x$, $f(x, t) = tx$ where $x \in [0,1]$ where $\theta^* = 3.56$ \cite[Example 5.1]{plumlee2017bayesian},
    \item {\bf Model 2:} $f(x, t) = 7\{\sin (2\pi t_1 - \pi)\}^2 + 2\{(2\pi t_2 - \pi)^2 \sin (2\pi x - \pi)\}$, $y_R(x) = f(x, \theta^*)$ where $\theta^* = (0.2, 0.3)$, $x \in [0,1]$ \citep[Configuration 1]{xie2021bayesian}.
     \item {\bf Model 3:} $f(x, t) = t_1\sin x_1 x_2 + t_2(x_3 - 0.5)^2 + 10 x_4 + 5x_5$, $y_R(x) = f(x, \theta^*)$ where $\theta^* = (10, 20)$ (Friedman function). Here $x_j \in [0,1]$ for all $j = 1, \ldots, 5$.
\end{enumerate}

For all cases we simulate $n = 100$ field observations from the model $y^F(x_i) = y^R(x_i) + \epsilon_i, \, \epsilon_i \sim \Gauss(0, 0.2^2)$ where we sample the $x_i$'s uniformly over $[0,1]$. For the proposed method, we consider two priors on the bias function - GP and BART. {\color{black} For the GP prior, we use the Matern covariance kernel defined as $C(x, x') = \phi^2 (2^{1-\nu}/\Gamma(\nu))(\sqrt{2\nu} |x-x'|/\psi)^\nu K_\nu(\sqrt{2\nu} |x-x'|/\psi)$, where $K_\nu(\cdot)$ is the modified Bessel function of the second kind. We set $\phi =1$, $\psi = 1/2$ and consider three different choices of $\nu = 1/2, 3/2, 5/2$. We choose three different values of $\nu$ to understand the effect of the prior smoothness assumption of the bias on the posterior distribution of the calibration parameter. These three different covariance functions are also used for OGP. For the S-GaSP and PCAL method we choose the Matern covariance $\phi = 1$, $\psi = 1/2$ and $\nu = 3/2$. For the calibration parameter $\theta$, we choose a Gaussian prior centered at 0 with standard deviation 10 - this choice of prior is implemented for all versions of the PGP, PBART and OGP. We could not implement the S-GaSP method for {\bf Model 2} in our repeated attempts.}

{\color{black} To sample $\theta$, we use a random-walk Metropolis sampler with target acceptance rate 0.3 for all versions of PGP and PBART. For example, consider the model $y = f(\theta, x) + b_{\tilde{\theta}}(x) + \epsilon$. Given a $\theta$, we propose a candidate $\theta_c \sim \Gauss(\theta, \sigma_{MH}^2)$. Let $f_\theta = (f(\theta, x_1), f(\theta, x_2, ), \ldots, f(\theta, x_n))^\T$ and $f_{\theta_c} = (f(\theta_c, x_1), f(\theta_c, x_2, ), \ldots, f(\theta_c, x_n))^\T$. Then conditional on $b_{\tilde{\theta}}(x)$, $y^{(n)} - b_{\tilde{\theta}} \sim \Gauss(f_\theta, \sigma^2 \mathrm{I}_n)$ at the current value $\theta$. Similarly, $y^{(n)} - b_{\tilde{\theta}} \sim \Gauss(f_{\tilde{\theta^\star}}, \sigma^2 \mathrm{I}_n)$. This is used to compute the Metropolis--Hastings ratio to update $\theta$.
In low-dimensions, the sampler mixes quite well as is evidenced in Figure \ref{fig:theta_posterior}, where we plot the posterior samples of $\theta$ for {\bf Model 1} and of $(\theta_1, \theta_2)$ for {\bf Model 2}, under a PBART posterior. A red line is added in each figure to indicate the true values of the parameters. We also computed the effective sample size (ESS) $L_{eff} =  L/(1 + 2\sum_{k=1}^\infty r_k)$ with $L$ the length of the chain, and $r_k$ the autocorrelation of the chain with lag $k$. For {\bf Model 1} and PBART, the effective sample size with 5000 MCMC iterations is on average 850. For {\bf Model 2} these numbers were approximately 750 and 940 for $(\theta_1, \theta_2)$.  } 

{\color{black} In Tables \ref{tab:case1} and \ref{tab:case2}, we summarize the results for {\bf Model 1} and {\bf Model 2} over 100 replications. Specifically, we report the range of posterior mean, range of standard deviation, and coverage of $95\%$ credible intervals along with the average runtime of the methods over these 100 replications. Although all methods generally agree in terms of the posterior mean, a clear distinction emerges in the spread of the posterior distribution. In particular, our posterior projection based approaches have lower posterior standard deviation compared to OGP and S-GaSP (for {\bf Model 1}). For example, posterior standard deviations of PGP, PBART, PCAL are almost an order of magnitude smaller compared to OGP and S-GaSP for {\bf Model 1}, and about 20-30\% smaller for {\bf Model 2}. Within the projection based approaches,  PCAL and PBART achieve the lowest standard deviation both for {\bf Model 1} and {\bf Model 2}. Moreover, the reduced posterior spread does not come at the cost of mis-calibration of coverage -- PGP, PBART, PCAL provide close to the nominal coverage in all settings. Among the two posterior projection approaches considered here, PGP and PBART, PGP provides slightly better uncertainty quantification. This might be due to the non-smooth prior specification resulting from BART. A smooth version of BART such as \cite{linero2018bayesian} might potentially address this issue.
In terms of computing time, we see that PBART achieves fastest computing time.
}
\begin{figure}
     \centering
     \begin{subfigure}[b]{0.3\textwidth}
         \centering
         \includegraphics[width=\textwidth, height = 4cm]{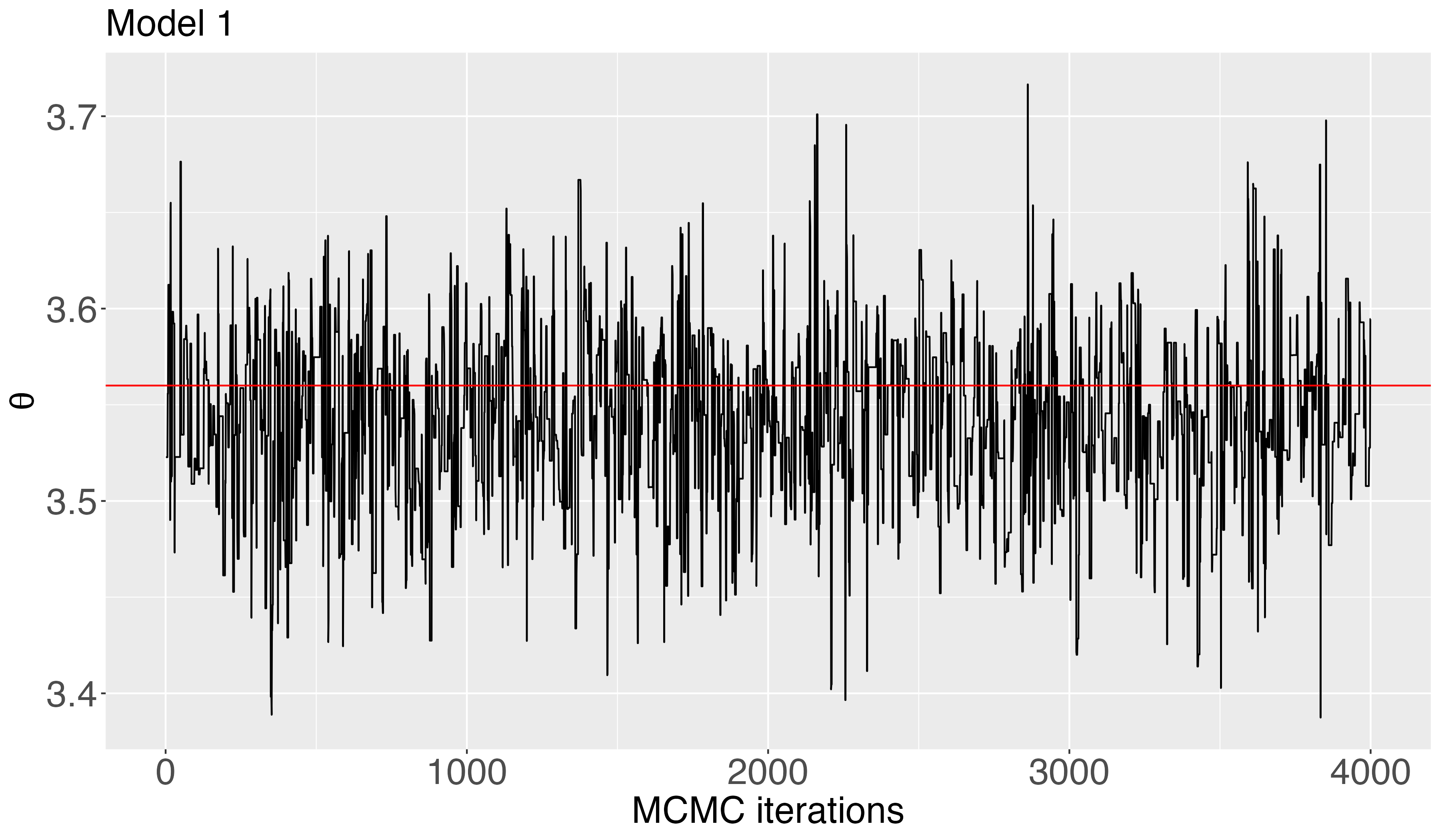}
         \caption{Traceplot of posterior samples of $\theta$ for {\bf Model 1}. }
     \end{subfigure}
     \hfill
     \begin{subfigure}[b]{0.3\textwidth}
         \centering
         \includegraphics[width=\textwidth, height = 4cm]{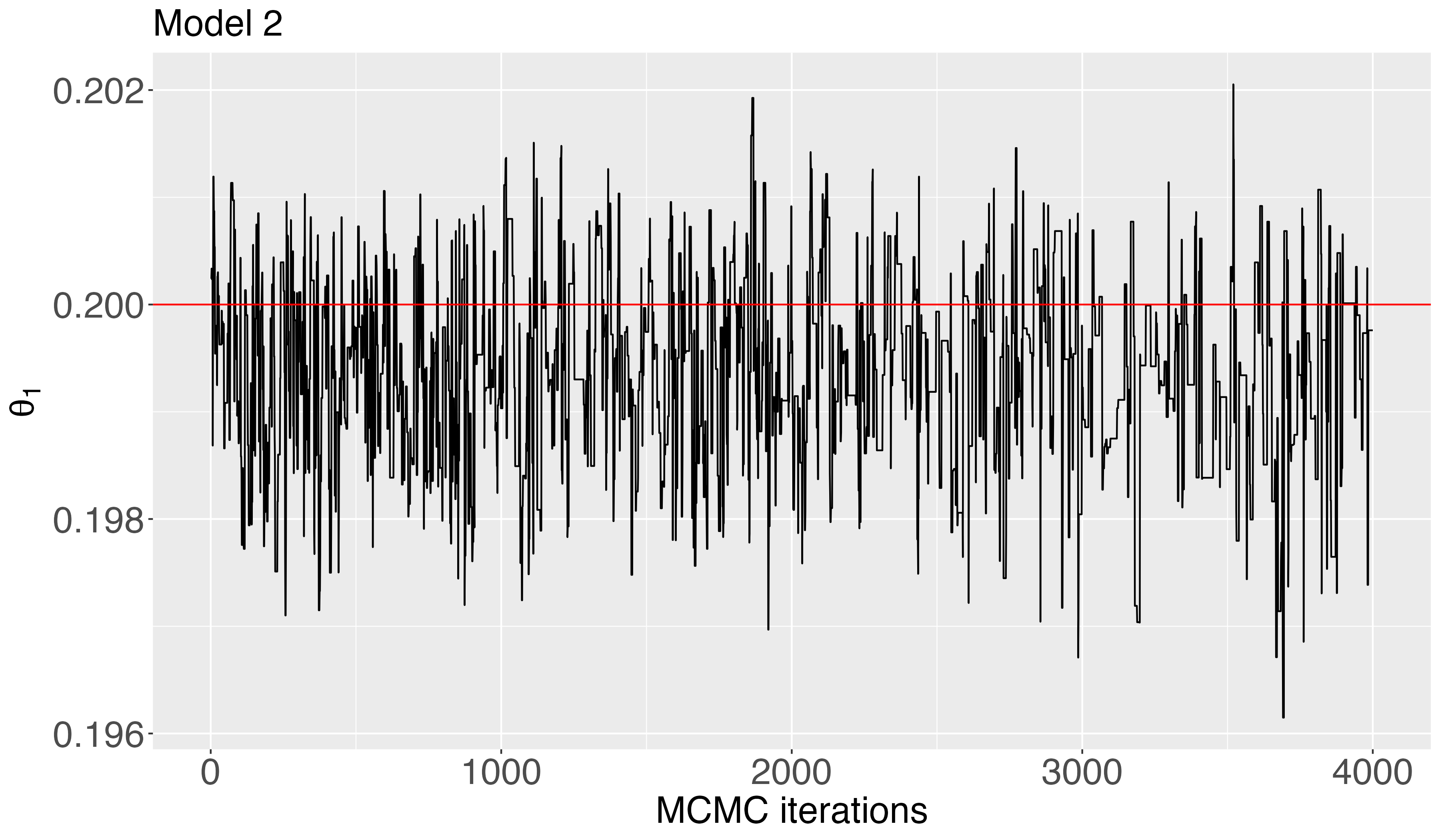}
         \caption{Traceplot of posterior samples of $\theta_1$ for {\bf Model 2}.}
     \end{subfigure}
     \hfill
     \begin{subfigure}[b]{0.3\textwidth}
         \centering
         \includegraphics[width=\textwidth, height = 4cm]{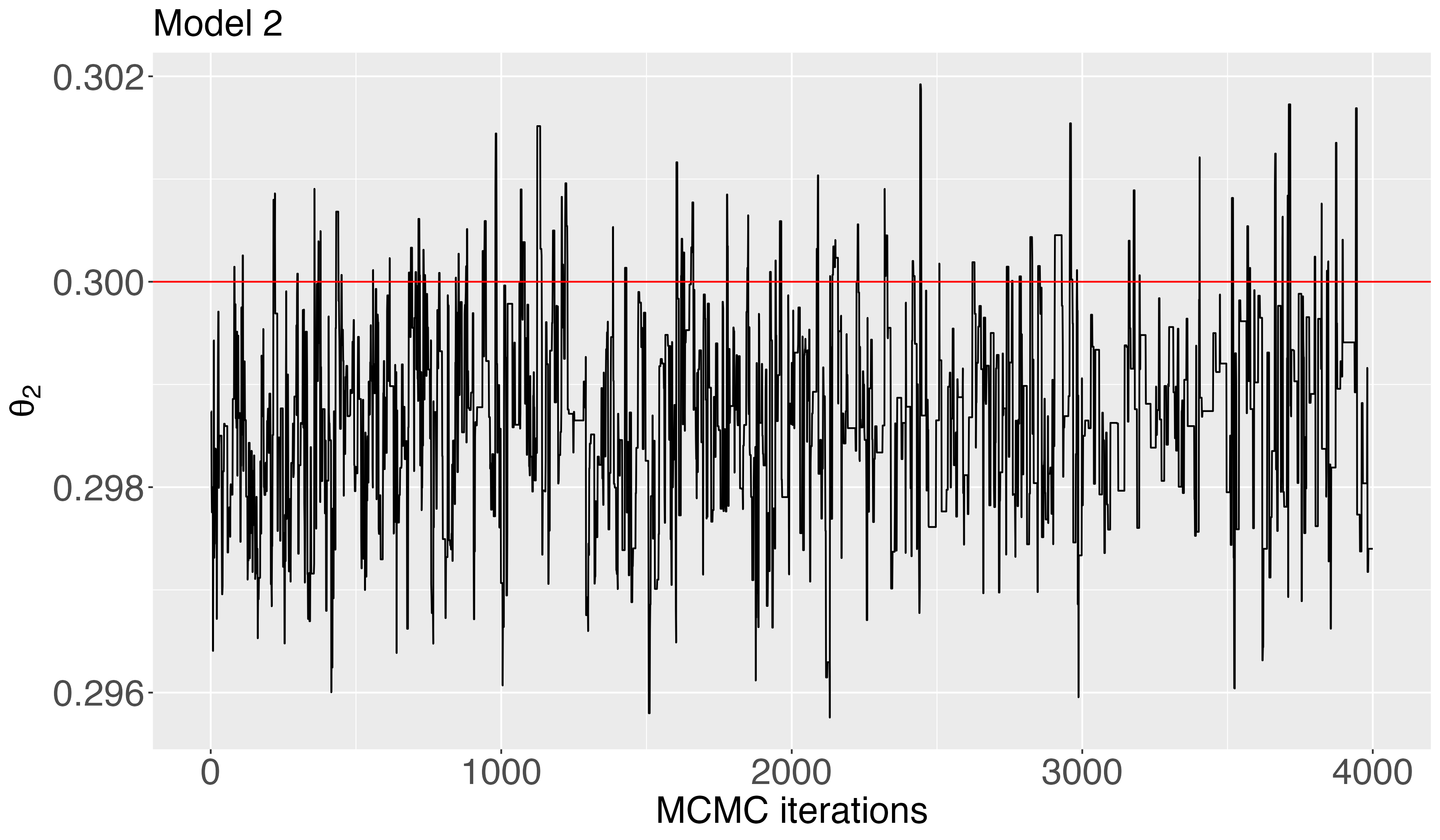}
         \caption{Traceplot of posterior samples of $\theta_2$ for {\bf Model 2}.}
     \end{subfigure}
        \caption{PBART posterior traceplots for {\bf Model 1} and {\bf Model 2}.}
        \label{fig:theta_posterior}
\end{figure}
\begin{table}
    \centering
    \scalebox{0.6}{
    \begin{tabular}{cccccccccc}
    \hline
         & \multicolumn{9}{c}{\textbf{Model 1 ($\theta^* = 3.56$)}} \\
         \hline
         & PGP(1/2) & PGP(3/2) & PGP(5/2) & PBART & OGP(1/2) &  OGP(3/2) & OGP(5/2) & S-GaSP  & PCAL\\
         \hline
         Mean & (3.47, 3.66) & (3.48, 3.65) & (3.49, 3.65) & (3.48, 3.62) & (3.49, 3.62) & (3.49, 3.62) & (3.41, 3.68) & (3.46, 3.68) & (3.48, 3.67) \\
        Std. Dev. & (0.09, 0.11) & (0.07, 0.09) & (0.06, 0.08) & (0.04, 0.05) & (0.14, 0.18) & (0.14, 0.18) & (0.14, 0.18) & (0.30, 0.45) & (0.03, 0.04)\\
         Coverage & 0.97 & 0.97 & 0.97 & 0.97 & 0.99 & 0.99 & 0.99 & 1 & 0.95\\
         \midrule
         Runtime &3.78 sec & - & - & 1.88 sec & 9.82 sec & - & - & 3.23 sec & 16.40 min\\
         \hline
    \end{tabular}
    }
    \caption{Simulation results for \textbf{Model 1}. We report the range of posterior mean, range of posterior standard deviation, coverage of the 95\% credible intervals and the average runtime over 100 replications.}
    \label{tab:case1}
\end{table}

\begin{table}
    \centering
    \scalebox{0.55}{
    \begin{tabular}{cccccccccc}
    \hline
         & \multicolumn{8}{c}{\textbf{Model 2 ($\theta^* = (0.2, 0.3)$)}} \\
         \hline
         & & PGP(1/2) & PGP(3/2) & PGP(5/2) & PBART & OGP(1/2) & OGP(3/2) & OGP(5/2) & PCAL \\
         \hline
        \multirow{3}{*}{$\theta_1^*$} &Mean & (0.19, 0.20) & (0.19, 0.20) & (0.19, 0.20) & (0.19, 0.20) & (0.19, 0.20)  & (0.19, 0.20) & (0.19, 0.20) & (0.19, 0.20)\\
        &Std. Dev. & (0.001, 0.001) & (0.001, 0.001) & (0.001, 0.001) & (0.0008, 0.0009) & (0.003, 0.004) & (0.003, 0.004) & (0.003, 0.004)& (0.0007, 0.0007) \\
         & Coverage & 0.93 & 0.93 & 0.93 & 0.93 & 0.97 & 0.97 & 0.97 & 0.96\\
         \midrule
         \multirow{3}{*}{$\theta_2^*$} &Mean & (0.29, 0.30) & (0.29, 0.30) & (0.29, 0.30) & (0.29, 0.30) & (0.29, 0.30)  & (0.29, 0.30) & (0.29, 0.30) & (0.29, 0.30)\\
        &Std. Dev. & (0.001, 0.002) & (0.001, 0.002) & (0.001, 0.002) & (0.0009, 0.001) & (0.004, 0.004) & (0.004, 0.005) & (0.004, 0.005)& (0.0007, 0.0009) \\
         & Coverage & 0.93 & 0.93 & 0.93 & 0.91 & 0.98 & 0.98 & 0.98 & 0.92\\
         \midrule
         & Runtime & 18.18 sec & - & - & 9.04 sec & 15.87 sec & - & -  & 16.58 min  \\
         \hline
    \end{tabular}
    }
    \caption{Simulation results for \textbf{Model 2}. Reported quantities are the same as in Table \ref{tab:case1}.}
    \label{tab:case2}
\end{table}

{\color{black} We next consider the high-dimensional example in {\bf Model 3}. The key challenge in this case is the 5-dimensional input space $\mathcal{X} = [0,1]^5$. To demonstrate the flexibility of the projection based approach, we also consider the BMARS \citep{denison1998bayesian} prior for the bias function, which is also a recursive partitioning based nonparametric prior over functions. We call the resulting projection based approach PBMARS and compare the results with PGP, PBART and OGP. Both the proposed projection-based method and orthogonal calibration method require evaluating integrals over the five-dimensional input space $\mathcal{X}$. To mitigate the curse of dimensionality associated with quadrature-based methods, we use a Monte Carlo estimate of the integrals with Monte Carlo sample size 200 for PGP, PBART, PBMARS and 500 for OGP. The results of 100 replications are summarized in Table \ref{tab:case3}. The Gaussian process based methods (both the projection approach and OGP) have much higher computation times than PBART and PBMARS. Additionally, in accord with our results for {\bf Model 1} and {\bf Model 2}, we see that both PBART and PBMARS have the smallest posterior standard deviation in this case as well. All methods show significantly large uncertainty in $\theta_2$ compared to $\theta_1$.
The mixing of the MCMC samples for $(\theta_1, \theta_2)$ are shown in Figure \ref{fig:theta_posterior2} for PBART. The chain for $\theta_1$ mixes quite well but the chain for $\theta_2$ shows high correlation. We also observed this same effect for the other methods. A potential improvement can be achieved by gradient-based MCMC methods such as the HMC \cite{neal2011mcmc}.

}

\begin{table}
    \centering
    \scalebox{0.7}{
    \begin{tabular}{cccccccc}
    \hline
         & \multicolumn{6}{c}{\textbf{Model 3 ($\theta^* = (10, 20)$)}} \\
         \hline
         & & PGP(1/2) & PGP(3/2) & PGP(5/2) & PBART &  PBMARS  & OGP(5/2) \\
         \hline
        \multirow{3}{*}{$\theta_1^*$} &Mean & (9.96, 10.03) & (9.95, 10.05) & (9.95, 10.05) & (9.95, 10.05) & (9.95, 10.04) & (9.95, 10.04)\\
       & Std. Dev. & (0.04, 0.05) & (0.04, 0.05) & (0.04, 0.05) & (0.02, 0.03) & (0.02, 0.03) & (0.05, 0.13) \\
       & Coverage & 0.94 & 0.94 & 0.94 & 0.91 & 0.91 & 0.99 \\
       \hline
       \multirow{3}{*}{$\theta_1^*$} &Mean & (19.77, 20.51) & (19.43, 20.42) & (19.57, 20.28) & (19.48, 20.38) & (19.50, 20.38) & (19.31, 20.84)\\
       & Std. Dev. & (0.29, 0.37) & (0.26, 0.34) & (0.24, 0.35) & (0.20, 0.26) & (0.18, 0.25) & (0.51, 0.88) \\
       & Coverage & 0.94 & 0.93 & 0.93 & 0.91 & 0.91 & 1 \\
         & Runtime &1.72 min & - & - & 13.16 sec & 10.01 sec & 59.13 sec\\
         \hline
    \end{tabular}
    }
    \caption{Simulation results for \textbf{Model 3}. We report the posterior mean, posterior standard deviation, intervals and the average runtime.}
    \label{tab:case3}
\end{table}
\begin{figure}
     \centering
     \begin{subfigure}[b]{0.4\textwidth}
         \centering
         \includegraphics[width=\textwidth, height = 4cm]{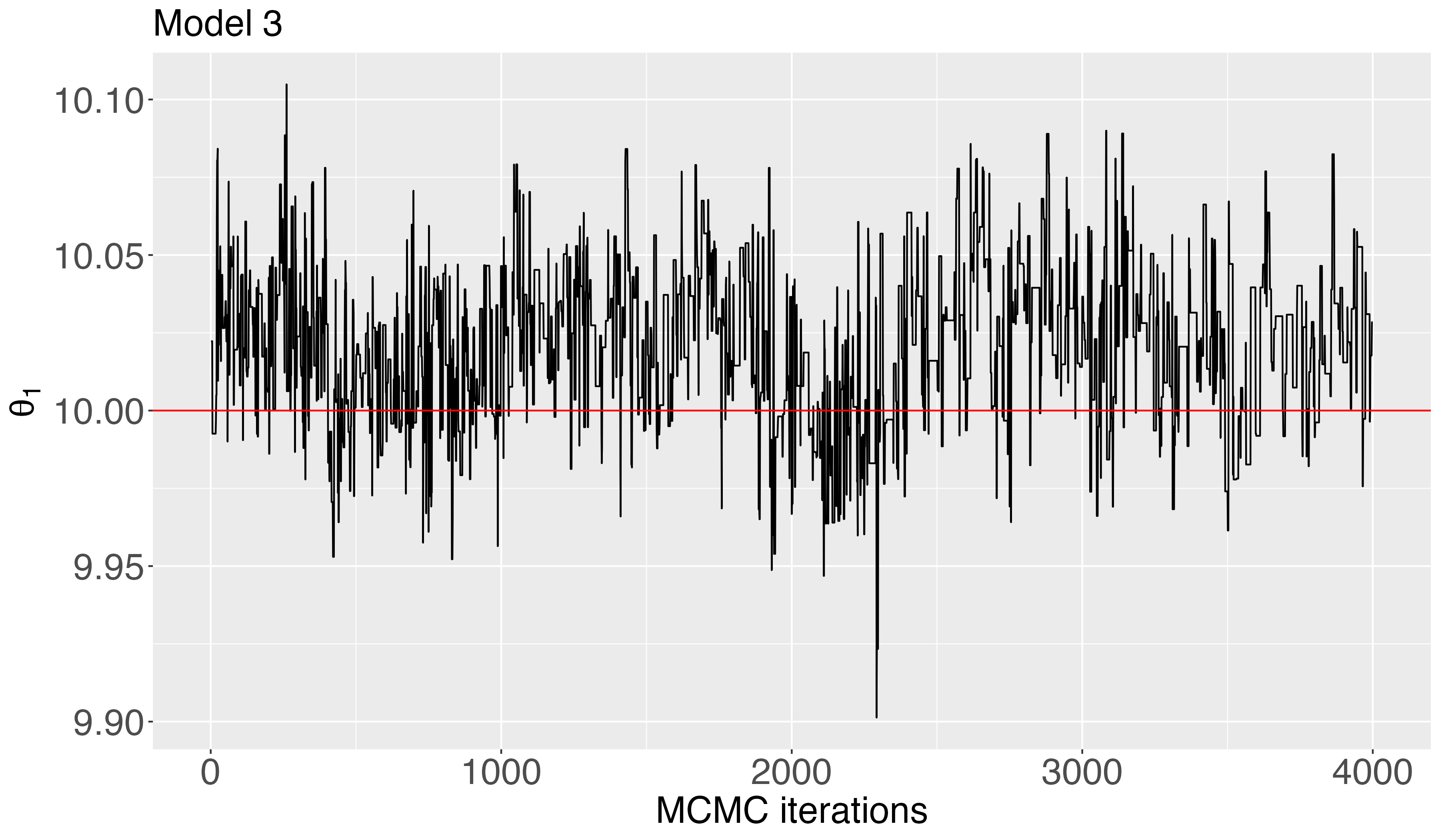}
         \caption{Traceplot of posterior samples of $\theta_1$ for {\bf Model 3}. }
     \end{subfigure}
     \begin{subfigure}[b]{0.4\textwidth}
         \centering
         \includegraphics[width=\textwidth, height = 4cm]{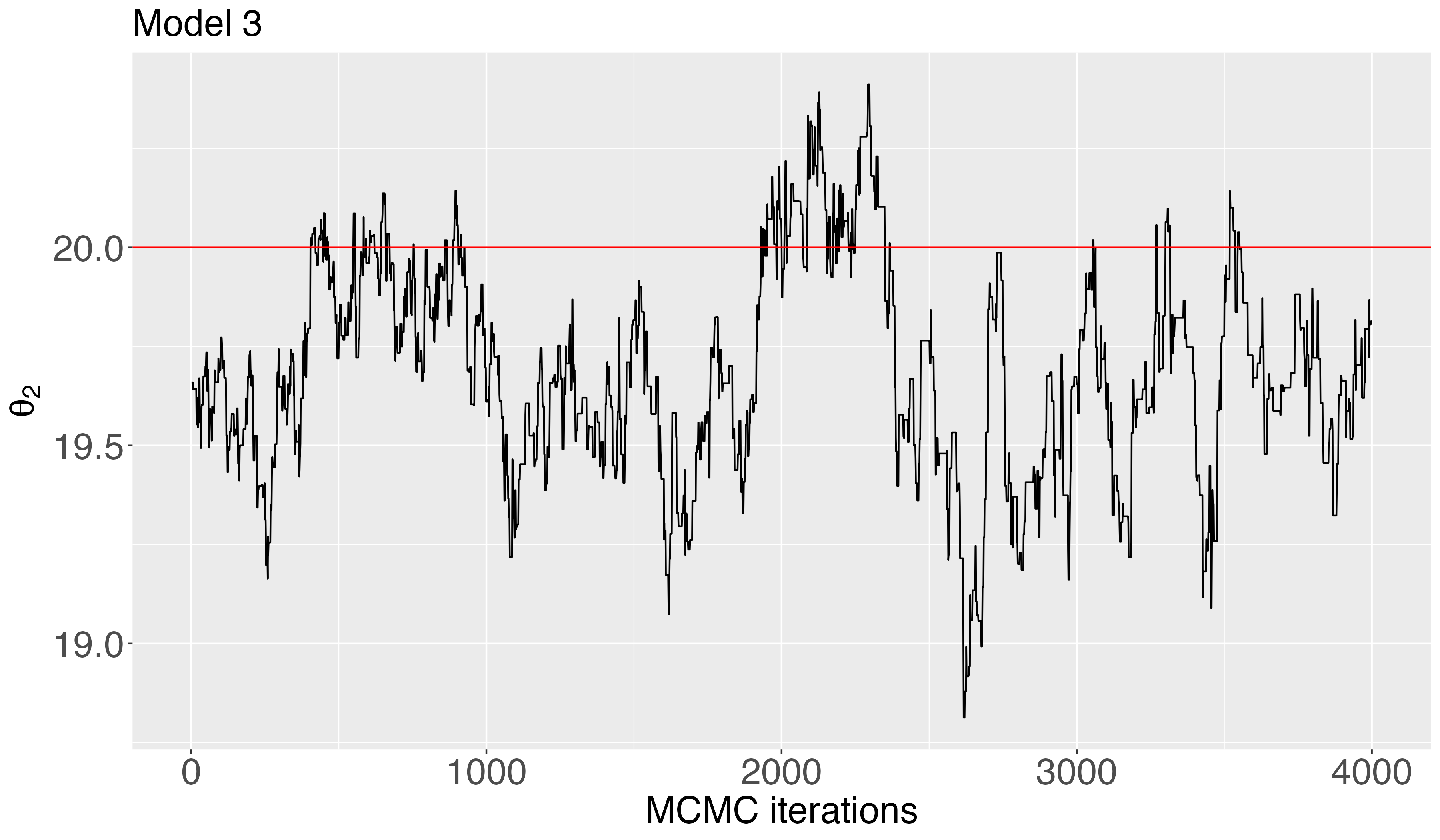}
         \caption{Traceplot of posterior samples of $\theta_2$ for {\bf Model 3}.}
     \end{subfigure}
        \caption{PBART posterior traceplots for {\bf Model 3}.}
        \label{fig:theta_posterior2}
\end{figure}

\subsection{Explicitly unavailable computer model}
We additionally consider one more situation where we do not assume the knowledge of $f(x,t)$ in \textbf{Model 2} and approximate it by the posterior predictive mean of a BART when $N = 7^2$ code outputs are available on the observed values of $x$, meaning we have for each value $\{\theta_1, \ldots, \theta_N\}$ the computer code output is available for $f(\theta_k, x_i), \, k = 1, \ldots, N, i = 1, \ldots, n$. Given the estimate $\hat{f}$, we approximate $\frac{\partial}{\partial \theta_1}f(\theta, x)$ and $\frac{\partial}{\partial \theta_2}f(\theta, x)$ by the two-point estimate described in Section \ref{sec:unknown_computer_model}. The results for PGP and PBART are summarized in Table \ref{tab:case4}. The key difference from the results in Table \ref{tab:case2} is the increased uncertainty in the estimates of $(\theta_1, \theta_2)$.
\begin{table}
    \centering
    \scalebox{0.8}{
    \begin{tabular}{ccc}
    \hline
         & \multicolumn{2}{c}{\textbf{Model 2 ($\theta^* = (0.2, 0.3)$)}} \\
         \hline
         & PGP & PBART  \\
         \hline
         Mean &  (0.21, 0.33) & (0.22, 0.34)  \\
        Std. Dev. & (0.03, 0.03) & (0.04, 0.03)\\
         Runtime & 24 min & 20 min\\
         \hline
    \end{tabular}
    }
    \caption{Simulation results for \textbf{Model 3}. We report the posterior mean, posterior variance and the average runtime.}
    \label{tab:case4}
\end{table}


\section{Application to Schwarzschild's model}\label{sec:schwarzchild_model}
The study of the mass distribution within galaxies is central in the quest of understanding of black holes, stellar components, dark matter, and the growth of galaxies. \cite{schwarzschild1979numerical} presented an orbital superposition method for constructing a self-consistent mass model of galaxies. It consists of ``integrating a representative set of orbits in a static triaxial gravitational potential, and finding weights for these orbits such that their superposition reproduces the assumed mass distribution" \citep{quenneville2021dynamical}. Besides determining black hole masses, Schwarzschild modeling is a powerful tool for measuring a galaxy's mass-to-light ratio, dark matter halo properties, stellar orbital distribution, viewing orientation, and intrinsic three-dimensional shape, allowing for the further study of galaxy assembly histories \citep{mehrgan2019}. 
Several modifications to the initial method have since been proposed among which \cite{van2008triaxial}'s triaxial orbit superposition has become very popular. In standard applications of the method, the model is compared with kinematic and photometric data to determine best-fit parameters such as black hole mass $(\theta_1)$, stellar mass-to-light ratio $(\theta_2)$, and fraction of dark matter halo $(\theta_3)$.

\begin{figure}
     \centering
     \begin{subfigure}{0.48\textwidth}
         \centering
         \includegraphics[height = 3cm, width = 8cm]{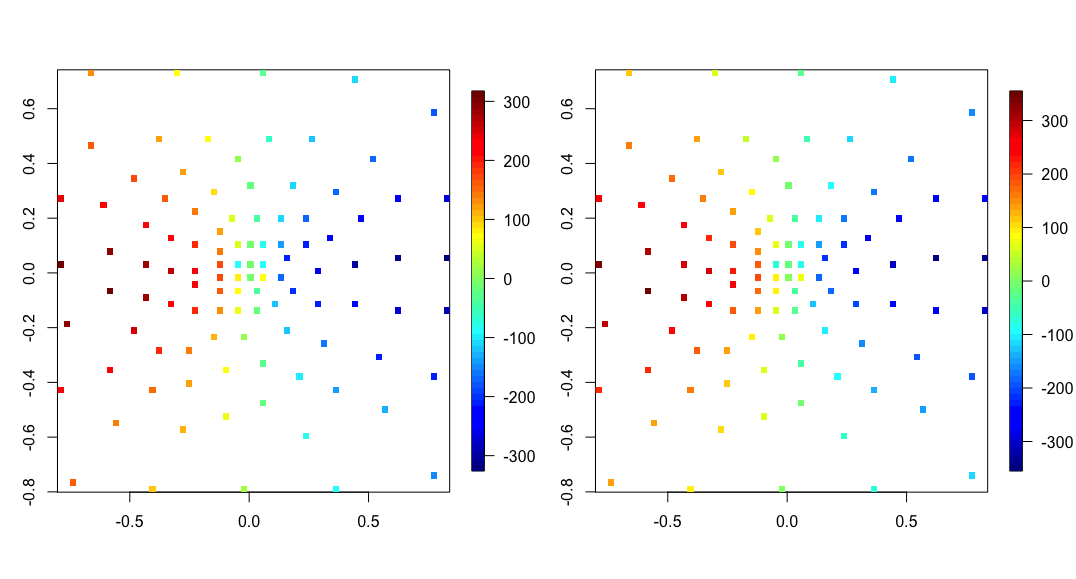}
         \caption{Left: $v_{S}$; Right: $v_{F}$}
     \end{subfigure}
     \begin{subfigure}{0.48\textwidth}
         \centering
         \includegraphics[height = 3cm, width = 8cm]{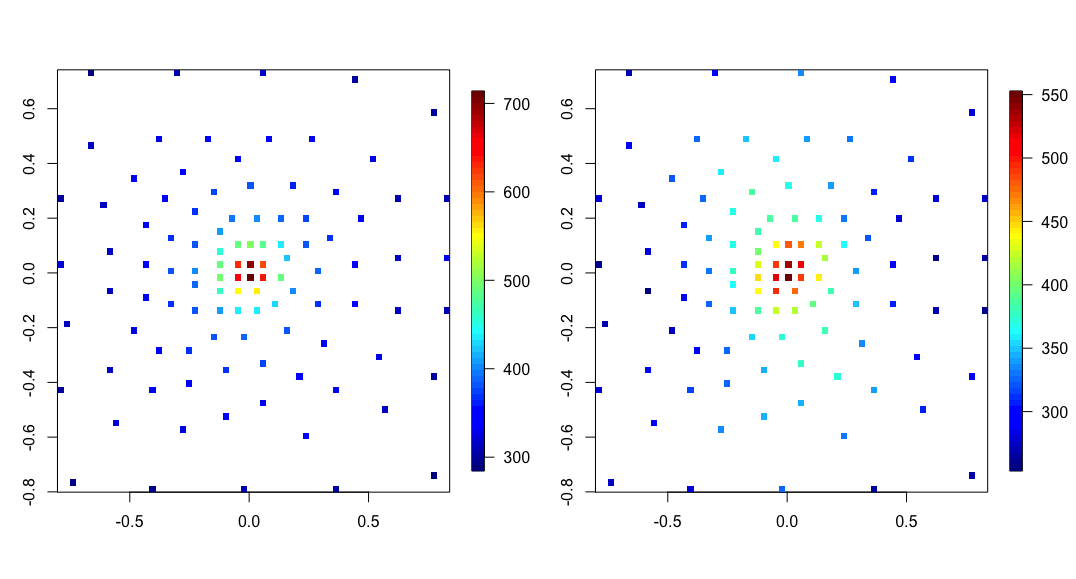}
         \caption{Left: $\tau_{S}$; Right: $\tau_{F}$}
     \end{subfigure}
     \hfill
     \begin{subfigure}{0.48\textwidth}
         \centering
         \includegraphics[height = 3cm, width = 8cm]{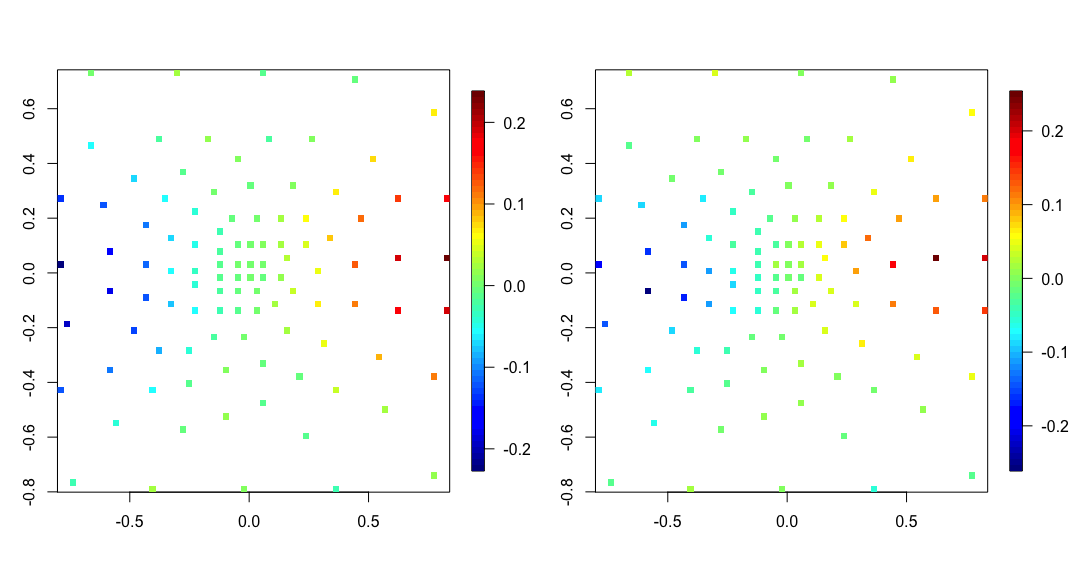}
         \caption{Left: $h_{3,S}$; Right: $h_{3,F}$}
     \end{subfigure}
     \begin{subfigure}{0.48\textwidth}
         \centering
         \includegraphics[height = 3cm, width = 8cm]{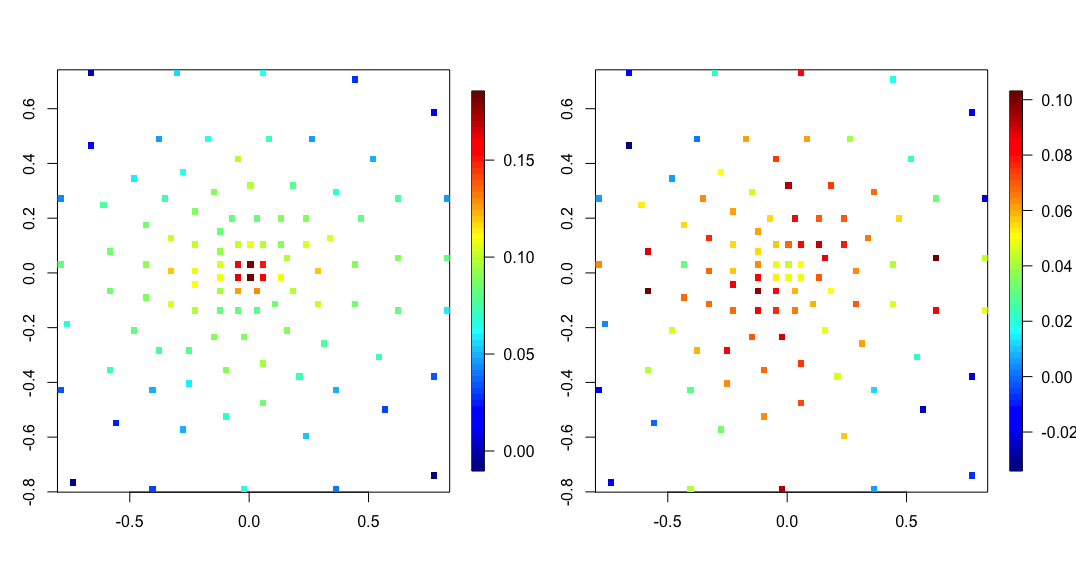}
         \caption{Left: $h_{4,S}$; Right: $h_{4,F}$}
     \end{subfigure}
     \caption{Orbital superposition output versus observed data on four moments of the velocity distribution for $\theta = (10.22, 10, 1)^\T$. The superposition method is implemented by a code originally developed by \cite{van2008triaxial}. }
     \label{fig:obsd_V_model}
\end{figure}

In it's current version \citep{van2008triaxial}, for a given input parameter combination $t = (t_1, t_2, t_3)^\T$, the code outputs the first four moments of the line-of-sight velocity distribution $f_S(x;t) = (v_{S}(x;t), \tau_{S}(x;t), h_{3,S}(x;t), h_{4,S}(x;t))^\T $ for points $x$ in a 2-dimensional spatial grid $\mathcal{X}$. The points $x$ are typically chosen to match the locations of the physical photometric data, providing further information about the mass distribution. Specifically, physical data $y_F(x_i) = (v_{F}(x_i), \tau_{F}(x_i), h_{3,F}(x_i), h_{4,F}(x_i))^\T$ is available for $i = 1, \ldots, n, \, n = 105$ locations of $\mathcal{X} = [-1, 1]^2$. In Figure \ref{fig:obsd_V_model}, color coded output from the code and the observed data are shown when $t = (10.22, 10, 1)^\T$ at the $n$ locations. This superposition technique is implemented by a computer code (FORTRAN) which is extremely computationally intensive; e.g. for one input of $\theta$, approximately 3 hours is needed to generate the output. The code output is available for three different values of $t_3 = 1, 2,3$, and for each value $t_3$, the output is available over a two-dimensional grid of values for $(t_1, t_2)$ where $t_1$ ranges from 10.22 to 10.31 and $t_2$ ranges from 10 to 10.25. Overall, the computer code output is available for $N = 476$ different values of $t$. 
Naturally, an exhaustive search over the input parameter space that compares the code output to the observed data to find the best-fit parameter combination is prohibitive. However, from limited experiments performed separately on each of the 4 outputs of the model, scientists have seen that for the point  $\tilde{\theta} = (9.92, 9.29, 1.27)$ the model outputs best approximated the observed data up to slight variations for each of the outcome. 
Here, we carry out a joint calibration of the Schwarzschild model. For this, we first estimate the covariance matrix as $\Sigma = \text{Cov}(E)$, where $E$ is a $n \times q$ error matrix obtained by fitting univariate BART regressions to the field data of each outcome. Here we fit the PBART method with functional projection. The marginal posterior distribution of $\theta$ is shown in Figure \ref{fig:multivariate_sc_theta}. The 95\% symmetric posterior (marginal) credible intervals are [9.54, 9.87], [8.78, 9.13] and [0.97, 1.38] for $\theta_1, \theta_2, \theta_3$, respectively. Intervals for $\theta_1, \theta_2$ from the joint analysis clearly seem to lie at the intersection of the intervals from the univariate model, but the posterior of $\theta_3$ exhibits slight bimodality. Overall, by modeling the four outcomes simultaneously uncertainty in $\theta$ have reduced which may significantly cut down computation time for future applications with an expanded parameter space. 
We also show the marginal posterior predictive mean of the outcomes.
In the left panel of Figure \ref{fig:calibrated_v}, we plot the posterior predictive mean at the observed locations for the four outcomes $v_R(\cdot), \tau_R(\cdot), h_{3,R}(\cdot), h_{4,R}(\cdot)$. For reference, the observed data is plotted on the right panel of Figure \ref{fig:calibrated_v}. As mentioned earlier, the outcomes in this context represent the first four moments of velocity, and the model performance decreases as the moments increase. We also observed a similar phenomenon when we looked at the relative squared error loss for a held-out data set of size 20. For $v_F(x)$,  We define the relative squared error loss as $n_t^{-1} \sum_{i = 1}^{n_t} (1 - \hat{v}_F(x_i)/v_F(x_i))^2$ where $\hat{v_F}(x_i)$ represents the posterior mean of samples of $[v_{S}(x_i, \theta) + b_{v, \theta}(x_i)]$. Here $n_t$ is the test data size. The definition is similar for other outcomes. The values we obtained are 0.17, 0.003, 3.92 and 30.42 for $v, \tau, h_3, h_4$, respectively.
\begin{figure}
    \includegraphics[width = \textwidth = 0.9, height = 4.5cm]{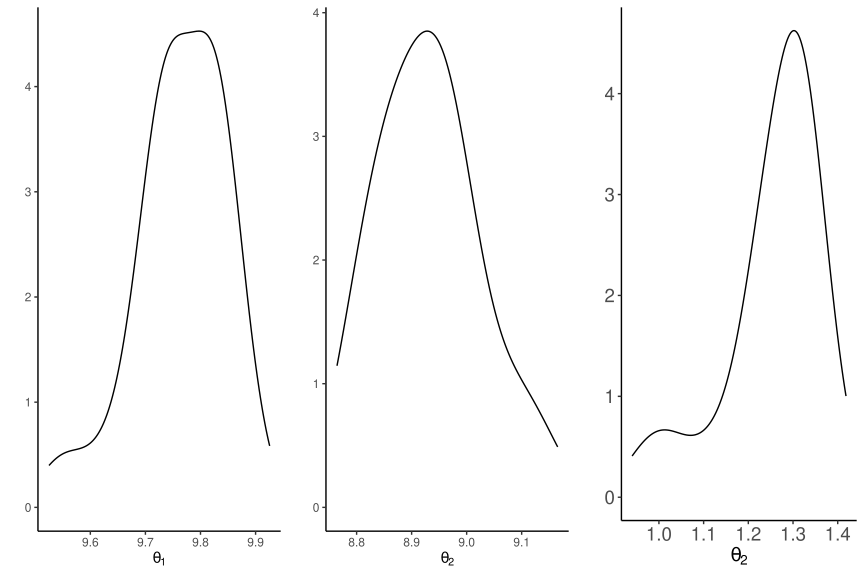}
    \caption{Posterior distribution of $\theta$ from multivariate PBART fit for the Schwarzschild model. }
    \label{fig:multivariate_sc_theta}
\end{figure}

\begin{figure}
     \centering
     \begin{subfigure}{0.4\textwidth}
         \centering
         \includegraphics[height = 4cm, width = 5cm]{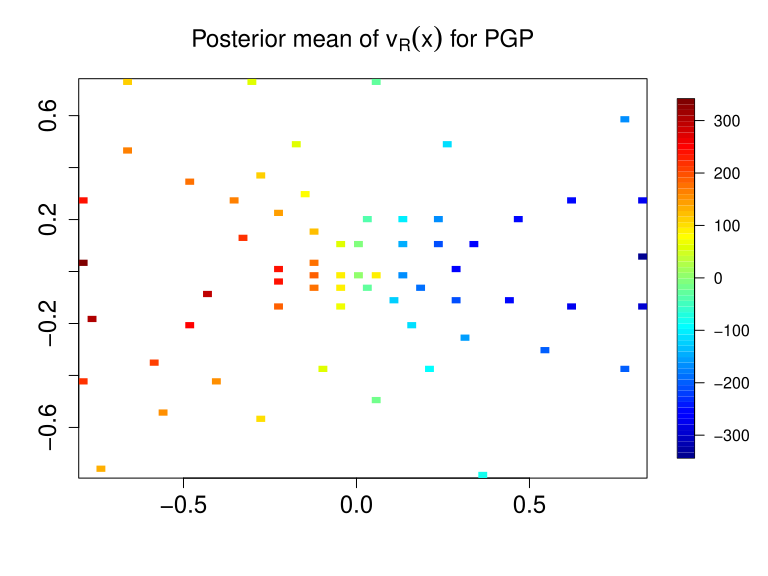}
     \end{subfigure}
     \begin{subfigure}{0.4\textwidth}
         \centering
         \includegraphics[height = 4cm, width = 5cm]{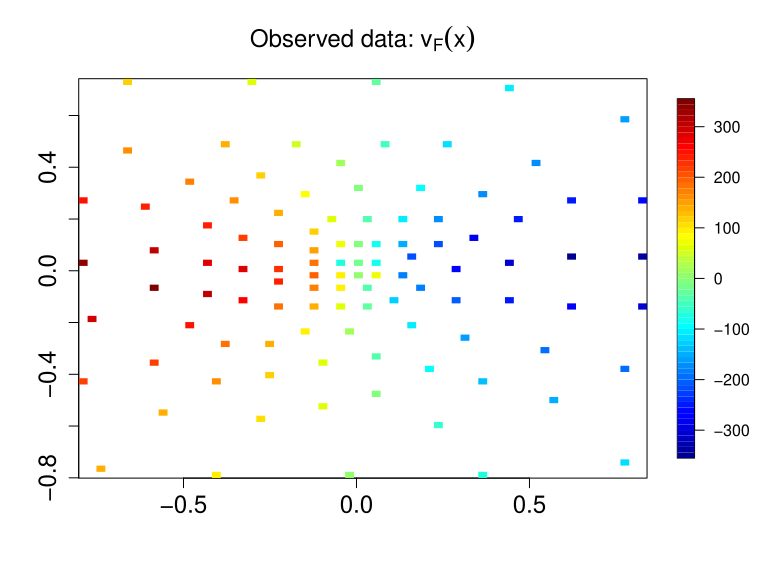}
     \end{subfigure}
     \begin{subfigure}{0.4\textwidth}
         \centering
         \includegraphics[height = 4cm, width = 5cm]{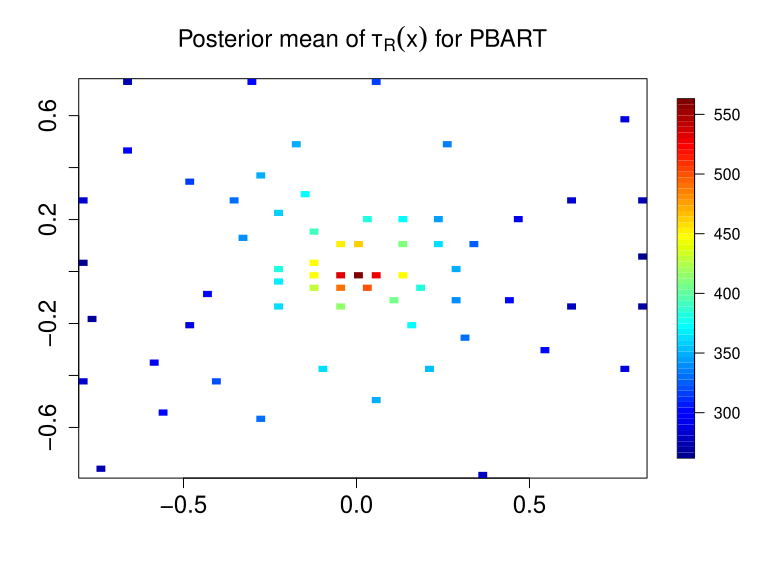}
     \end{subfigure}
     \begin{subfigure}{0.4\textwidth}
         \centering
         \includegraphics[height = 4cm, width = 5cm]{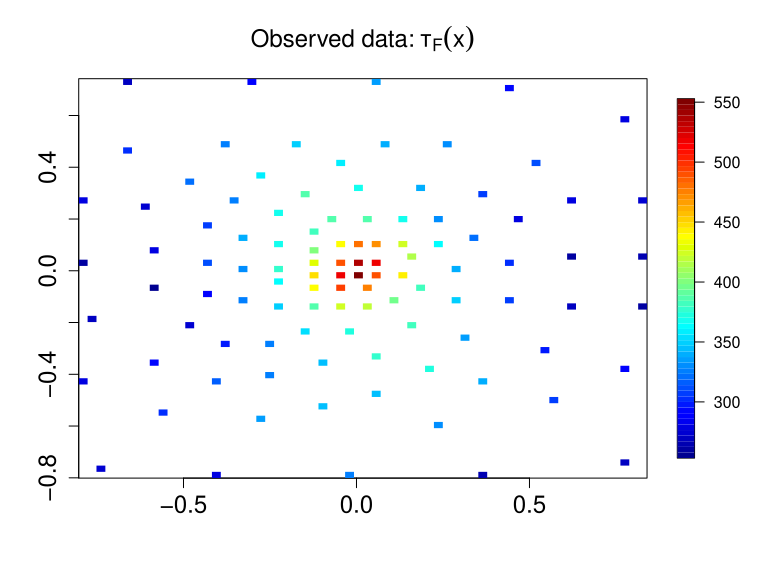}
     \end{subfigure}
     \begin{subfigure}{0.4\textwidth}
         \centering
         \includegraphics[height = 4cm, width = 5cm]{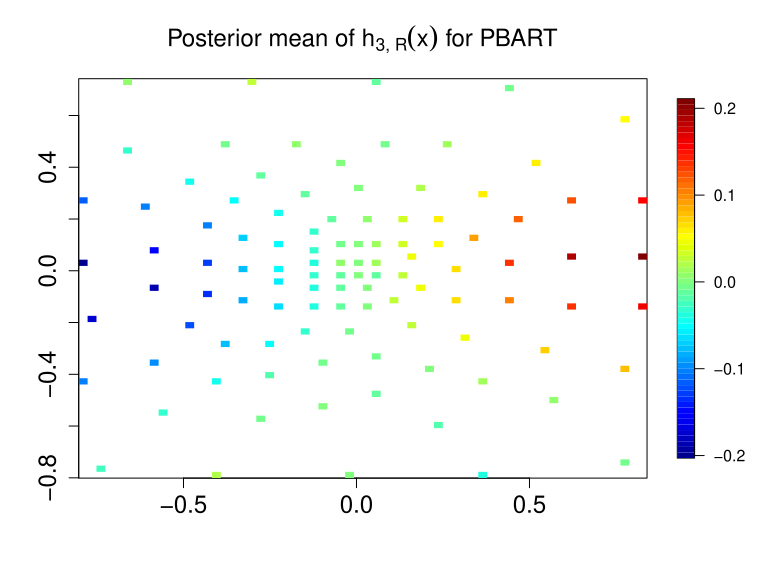}
     \end{subfigure}
     \begin{subfigure}{0.4\textwidth}
         \centering
         \includegraphics[height = 4cm, width = 5cm]{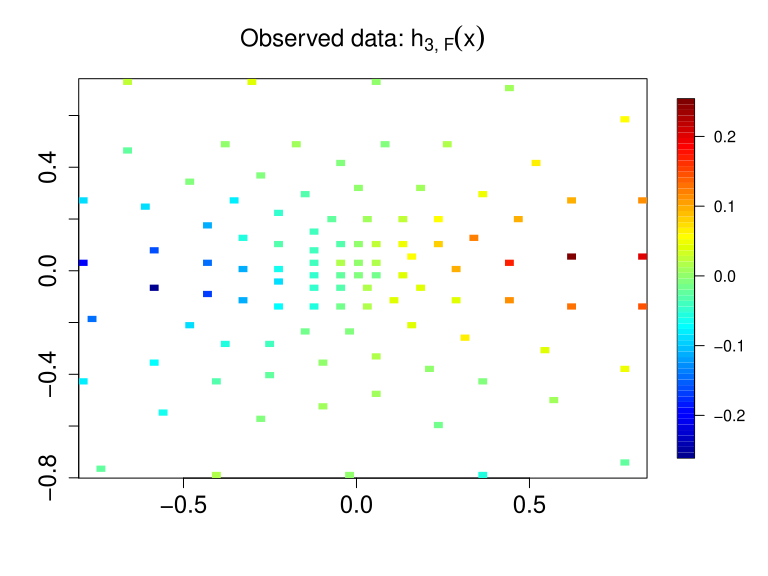}
     \end{subfigure}
     \begin{subfigure}{0.4\textwidth}
         \centering
         \includegraphics[height = 4 cm, width = 5cm]{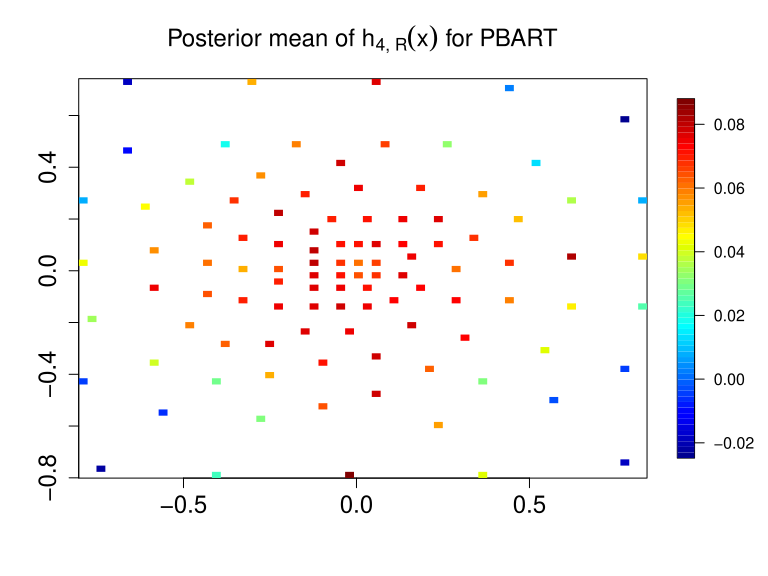}
     \end{subfigure}
     \begin{subfigure}{0.4\textwidth}
         \centering
         \includegraphics[height = 4cm, width = 5cm]{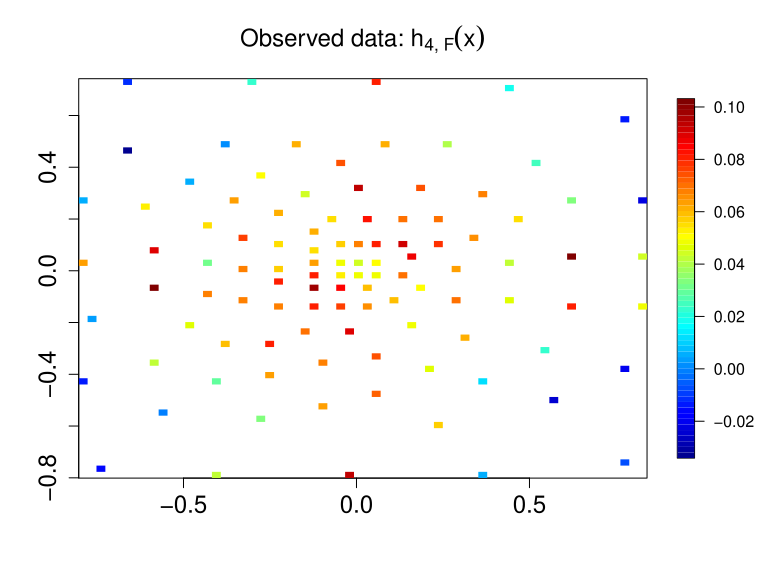}
     \end{subfigure}
     \caption{Posterior predictive mean of PBART is plotted for $v_R(\cdot), \tau_R(\cdot), h_{3,R}(\cdot), h_{4,R}(\cdot)$ in the left panel. In the second panel, the observed values of the corresponding quantities are also shown for reference. }
     \label{fig:calibrated_v}
\end{figure}

\section{Discussion}
The Schwarzschild model, although computationally intensive, is an important tool in understanding the dynamics of a black hole and its host galaxy. Motivated by this application we developed a multivariate calibration method that ensures parameter identifiablity under the squared error loss. The key benefits of the proposed projection posterior approach is its flexibility in accommodating user-specified prior distributions on the multivariate bias function, and the fact that such a projection is available analytically. Benefits of a multivariate analysis is demonstrated through numerical experiments and it seems to impact the analysis of Schwarzschild model positively as well. Conceptually, the projection approach can be extended to alternative loss functions that lead to different stationarity conditions and high-dimensional outcomes where low-rank models for the error covariance might be warranted.

\newpage
\begin{center}
{\large\bf SUPPLEMENTARY MATERIAL}
\end{center}
\medskip
(a) \emph{Supplementary Text:} contains proofs, and additional results from simulations and data analysis.\\
(b) \emph{Supplementary Code:} An \texttt{R} package to implement the proposed method can be downloaded from \href{https://github.com/antik015/OCal}{https://github.com/antik015/OCal}. Instructions for running the package and the code to reproduce Figure 1(a) - 1(c) and 2(a), 2(b) is also provided.





\bibliography{refs}
\bibliographystyle{chicago}

\setstretch{1.25}

\doublespacing
\clearpage
\begin{center}
	{\LARGE{\bf Supplementary Material to\\
	{\it Orthogonal calibration via posterior
projections with applications to the
Schwarzschild model}}
	}
\end{center}
\setcounter{equation}{0}
\setcounter{page}{1}
\setcounter{table}{0}
\setcounter{section}{0}
\setcounter{subsection}{0}
\setcounter{figure}{0}
\renewcommand{\theequation}{S.\arabic{equation}}
\renewcommand{\thesection}{S.\arabic{section}}
\renewcommand{\thealgorithm}{S.\arabic{algorithm}}
\renewcommand{\thepage}{S.\arabic{page}}
\renewcommand{\thetable}{S.\arabic{table}}
\renewcommand{\thefigure}{S.\arabic{figure}}


\section{Proofs}
\subsection{Proof of Proposition \ref{prop:multivariate_conditions}}
\begin{proof}
     Write $y_{R,k}(x) = f_{k, \theta}(x) + b_{\theta, k}(x)$ to obtain that the $j$-th element of $$\frac{\partial}{\partial t} L\{y_R(\cdot), f(\cdot, t)\}\rvert_{t =\theta} = \\ -2 \int_{\mathcal{X}} \sum_{k=1}^q  g_{j,k}(x) b_{\theta, k}(x) dx,$$ and recall $\theta^* = \argmin_{t \in \Theta} L\{y_R(\cdot), f(\cdot, t)\}$ which completes the proof.
\end{proof}
\subsection{Proof of Proposition \ref{prop:data_est}}
\begin{proof}
We first characterize $\theta$ in terms of the expected loss. Recall that for $Q = \mathrm{I}_q$, $\theta = \argmin_{t \in \Theta}  \int_x \sum_{k=1}^q \{y_{R,k}(x) - f_k(x, t)\}^2 dx$. For any fixed $t \in \Theta$, let $e(x) = \{y^{F}(x) - f(x, t)\}$, then
\begin{align*}
    &E_{P}\{e(x)^\T e(x)\}   = (\text{Vol}(\mathcal{X}))^{-1}\int_x \left[\sum_{k=1}^ q E_{P_{y\mid x}}\{e^2_k(x)\} \right]dx \\
    & = (\text{Vol}(\mathcal{X}))^{-1} \int_x \sum_{k=1}^q  E_{P_{y\mid x}}\{y_{F, k} - y_{R,k}(x)\}^2 dx + (\text{Vol}(\mathcal{X}))^{-1} \int_x \sum_{k=1}^q \{y_{R,k}(x) - f_k(x, t)\}^2 dx \\
    & = \text{tr}(\Sigma_F)  + (\text{Vol}(\mathcal{X}))^{-1} \int_x e(x)^\T e(x) dx,
\end{align*}
where $\text{tr}(A)$ denotes the trace of a matrix $A$.
Hence, $\theta$ can be seen as 
$$\argmin_{t \in \Theta} \int_{x} \sum_{k=1}^q\{y_{R,k}(x) - f_k(x, t)\}^2 dx = \argmin_{t \in \Theta}E_{P}[e(x)^\T e(x)].$$ 
Now define $Q_{n,k}(t) = n^{-1}\sum_{i=1}^n \{y_{F,k}(x_i) - f_k(x_i, t)\}^2$ and $Q_n(t) = q^{-1} \sum_{k=1}^q Q_{n,k}(t)$ which is simply $E_{P_n} [e(x)^\T e(x)]$ where $P_n$ is the corresponding joint empirical measure. Invoking Theorem 1 of \cite{pollard2006nonlinear} and the compactness of $\Theta$, 
we get that $\sup_f |Q_n(t) - E_P(Q_n(t))| \overset{P}{\to} 0$ which implies that $\theta_n^* \overset{P}{\to} \theta$. 
\end{proof}
\subsection{Proof of Lemma \ref{lm:projection_formula}}
\begin{proof}
The projection operator is the solution to the optimization problem 
\begin{align*}
    \min \norm{b_{\theta} - b_{\theta}^*}_{L^2_q} \quad \text{subject to } \sum_{k=1}^q \langle g_{j,k}, b_{\theta, k}^* \rangle = 0, \, j = 1, \ldots, p.
\end{align*}
Hence, the Lagrangian dual of the optimization problem is 
\begin{align*}
    \min \sum_{k=1}^q \norm{b_{\theta, k} - b_{\theta,k}^*}_L^2 + \sum_{j = 1}^p \lambda_j \left[\sum_{k=1}^q \langle g_{j,k}, b_{\theta, k}^* \rangle\right].
\end{align*}
Solving for $b_{\theta,k}^*$ gives $b_{\theta,k}^* = b_{\theta, k} +\sum_{j=1}^p \lambda_j g_{j,k}$. Thus, $\norm{b_{\theta, k} - b_{\theta,k}^*}_{L^2_q} = \lambda^\T (\sum_{k=1}^q Q_k) \lambda$ where $\lambda = (\lambda_1, \ldots, \lambda_p)^\T$ and $Q_k^{p \times p}$ is a matrix with $(j,j')$-th element $\langle g_{j,k} , g_{j',k}\rangle$. Also, $\sum_{k=1}^q \langle g_{j,k}, b_{\theta, k}^* \rangle = \sum_{k=1}^q \langle g_{j,k}, b_{\theta, k}\rangle + \sum_{j'=1}^p \lambda_{j'} Q_{k, jj'}$. {\color{black} Given the constraint that $\sum_{k=1}^q \langle g_{j,k}, b_{\theta, k}^* \rangle = 0$, we then have
$$\sum_{k=1}^q \langle g_{j,k}, b_{\theta, k}\rangle + \sum_{j'=1}^p \lambda_{j'} Q_{k, jj'}= 0 \iff Q\lambda + \eta = 0,$$
where $\eta \in \mathbb{R}^{p \times 1}$ with $\eta_j = \sum_{k=1}^q \langle g_{j,k}, b_{\theta, k} \rangle$. Hence, if $\lambda $ solves $Q\lambda = \eta$, we get $b_{\theta,k}^* = b_{\theta, k} -\sum_{j=1}^p \lambda_j g_{j,k}$.
}
\end{proof}
\section{Finite-dimensional projection}
When the central focus of analysis is assessing uncertainty in $\theta$, and $\Pi(b_k)$ is a Gaussian process, then the projection posterior approach can be implemented within a finite dimensional setting by ensuring $\sum_{k=1}^q \mathbf{b}_k^\T \mathbf{g}_{j,k} = 0$ where $\mathbf{b}_k = (b_k(x_1), \ldots, b_k(x_n))^\T$ and $\mathbf{g}_{j,k} = (g_{j,k}(x_1), \ldots, g_{j,k}(x_n))$. This is a consequence of posteriors of $\mathbf{b}_k$ being Gaussian under a Gaussian process prior and projections of multivariate Gaussian random vectors to linear subspaces. { \color{black} For example, suppose $X^{d \times 1} \sim \Gauss(\mu, \Sigma)$ and consider the set $S = \{x \in \mathbb{R}^d : A^\T x = 0\} \subset \mathbb{R}^d$ where $A^{d \times p}$ is matrix of rank $p$. Suppose $Y = A^\T X \in \mathbb{R}^p$, so that $\text{Cov}(X, Y) = \Sigma A$. Then, the joint distribution of $Z = (X, Y)$ is a $(d+p)$-dimensional Gaussian with parameters
\begin{align*}
    \mu_Z = \begin{pmatrix}
        \mu \\
        A^\T \mu
    \end{pmatrix}, \quad \Sigma_Z = \begin{pmatrix}
        \Sigma & \Sigma A \\
        A^\T \Sigma & A^\T \Sigma A 
    \end{pmatrix}.
\end{align*}
Moreover, $X \mid Y = y \sim \Gauss( \mu + \Sigma A (A^\T \Sigma A)^{-1} (y - A^\T \mu), \Sigma - \Sigma A (A^\T \Sigma A)^{-1} A^\T \Sigma)$. In particular, when $y = 0$, we obtain $X \mid X \in S  \sim \Gauss(\mu - \Sigma A (A^\T \Sigma A)^{-1}A^\T \mu, \Sigma - \Sigma A (A^\T \Sigma A)^{-1}A^\T \Sigma)$. Next, define the following projection for any $x \in \mathbb{R}^d$
\begin{equation}\label{eq:whitened_projection}
    P_S(x) = \Sigma^{1/2} \argmin_{y\in S} \norm{y - \Sigma^{-1/2}x}.
\end{equation}
Standard linear algebra yields that $P_S(x)$ has the form $\Sigma^{1/2}(I - P_A)\Sigma^{-1/2} x$ where $P_A = A(A^\T A)^{-1} A^\T$ is the projection matrix of $A$. This leads to the following interpretation of the multivariate Gaussian distribution conditional on linear equality constraints. 
\begin{proposition}\label{prop:projection_equivalence}
Suppose $X \sim \Gauss(\mu, \Sigma)$ and consider the set $S = \{x: A^\T x = 0\} \subset \mathbb{R}^d$ where $A^{d \times p}$ is matrix of rank $p$. Then the random variable $$P_S(X) = \Sigma^{1/2}(I - P_A)\Sigma^{-1/2}X \sim \Gauss(\mu - \Sigma A (A^\T \Sigma A)^{-1} A^\T \mu,\, \Sigma - \Sigma A (A^\T \Sigma A)^{-1}A^\T \Sigma).$$
\end{proposition}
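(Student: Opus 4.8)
The plan is to show that the pushforward of $\Gauss(\mu,\Sigma)$ under the map $P_S$ defined in \eqref{eq:whitened_projection} coincides with the claimed Gaussian law, by reducing everything to the standard (whitened) case where the constraint set is a linear subspace and the projection is the ordinary orthogonal projection. First I would record the explicit form of $P_S$: as noted just before the statement, standard linear algebra gives $P_S(x) = \Sigma^{1/2}(\mathrm{I}_d - P_A)\Sigma^{-1/2}x$, where $P_A = A(A^\T A)^{-1}A^\T$. Wait --- one should be slightly careful here: the set $S=\{x : A^\T x = 0\}$ is the orthogonal complement of the column space of $A$, so minimizing $\norm{y - \Sigma^{-1/2}x}$ over $y\in S$ is exactly orthogonal projection onto $\mathrm{col}(A)^\perp$, i.e.\ $y = (\mathrm{I}_d - P_A)\Sigma^{-1/2}x$, and hence $P_S(x) = \Sigma^{1/2}(\mathrm{I}_d-P_A)\Sigma^{-1/2}x$. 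So $P_S$ is a \emph{linear} map, call it $M := \Sigma^{1/2}(\mathrm{I}_d - P_A)\Sigma^{-1/2}$.

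Given that $P_S = M$ is linear, the image $MX$ of a Gaussian vector is automatically Gaussian, so it suffices to compute its mean and covariance. The mean is $M\mu = \Sigma^{1/2}(\mathrm{I}_d - P_A)\Sigma^{-1/2}\mu$, and I would simplify $(\mathrm{I}_d - P_A)\Sigma^{-1/2}\mu = \Sigma^{-1/2}\mu - A(A^\T A)^{-1}A^\T\Sigma^{-1/2}\mu$; multiplying by $\Sigma^{1/2}$ and using $A(A^\T A)^{-1}A^\T = \Sigma^{1/2}\cdot \Sigma^{-1/2}A(A^\T A)^{-1}A^\T\Sigma^{-1/2}\cdot\Sigma^{1/2}$ is not quite the identity I want; instead I would directly verify the algebraic identity
\begin{equation*}
\Sigma^{1/2}(\mathrm{I}_d - P_A)\Sigma^{-1/2} = \mathrm{I}_d - \Sigma A^\T_{\!*}\,(\text{something})
\end{equation*}
by a change of variables. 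Concretely, set $\tilde A := \Sigma^{1/2}A$ is the wrong substitution; the clean route is: in the whitened coordinates $z = \Sigma^{-1/2}x$, the constraint $A^\T x = 0$ becomes $A^\T\Sigma^{1/2} z = 0$, i.e.\ $B^\T z = 0$ with $B := \Sigma^{1/2}A$. Then $(\mathrm{I}_d - P_B)$ with $P_B = B(B^\T B)^{-1}B^\T$ is the orthogonal projection onto $\{z: B^\T z=0\}$, and pulling back, $P_S(x) = \Sigma^{1/2}(\mathrm{I}_d - P_B)\Sigma^{-1/2}x$. (This matches the earlier display once one checks $(\mathrm{I}_d-P_A)$ vs.\ $(\mathrm{I}_d-P_B)$ are consistent with the minimization actually written.) Expanding $\Sigma^{1/2}P_B\Sigma^{-1/2} = \Sigma^{1/2}\Sigma^{1/2}A(A^\T\Sigma A)^{-1}A^\T\Sigma^{1/2}\Sigma^{-1/2} = \Sigma A(A^\T\Sigma A)^{-1}A^\T$, so $M = \mathrm{I}_d - \Sigma A(A^\T\Sigma A)^{-1}A^\T$. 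Hence $M\mu = \mu - \Sigma A(A^\T\Sigma A)^{-1}A^\T\mu$ and $M\Sigma M^\T = \Sigma - \Sigma A(A^\T\Sigma A)^{-1}A^\T\Sigma$ after a short multiplication using the idempotence of $P_B$ and symmetry of $\Sigma$, which is exactly the claimed mean and covariance (with $A^\T$ in place of the paper's $A^\T(A\Sigma A^\T)^{-1}A$ notation — I would reconcile the transpose placement with the statement as written, which uses $\Sigma A^\T(A\Sigma A^\T)^{-1}A$, presumably under the convention that $A$ there is $p\times d$; I'll follow whichever convention the statement fixes).

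The main obstacle, then, is not any deep probability but bookkeeping: making sure the form of $P_S$ derived from the minimization in \eqref{eq:whitened_projection} is written consistently (orthogonal projection onto the \emph{whitened} constraint subspace, conjugated back by $\Sigma^{\pm 1/2}$), and then that the resulting linear map $M$ satisfies $M\Sigma M^\T = \Sigma - \Sigma A(A^\T\Sigma A)^{-1}A^\T\Sigma$, which requires the identity $P_B\Sigma^{-1/2}\Sigma\Sigma^{-1/2}P_B = P_B$ — immediate from $P_B^2 = P_B$. I would therefore organize the proof as: (i) rewrite $P_S$ explicitly as the linear map $M$ via whitening; (ii) observe $MX$ is Gaussian since $M$ is linear and $X$ is Gaussian; (iii) compute $\mathbb{E}[MX] = M\mu$ and $\Cov(MX) = M\Sigma M^\T$ and simplify both to the target expressions using idempotence of the underlying orthogonal projection; (iv) note this coincides with the conditional law of $X$ given $X\in S$ stated in the paragraph preceding the proposition, which yields the stated "projection interpretation" of the constrained Gaussian.
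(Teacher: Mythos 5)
Your proof is correct and follows essentially the same route as the paper's: both rest on writing $P_S$ as the linear map $M=\Sigma^{1/2}(\mathrm{I}-P_B)\Sigma^{-1/2}$ with $B=\Sigma^{1/2}A$ (the paper's $A\Sigma^{1/2}$, up to its transpose convention) and reducing the covariance identity to the idempotence $(\mathrm{I}-P_B)^2=\mathrm{I}-P_B$ — you compute $M\Sigma M^\T$ forward, while the paper factors the conditional covariance backward into $\Sigma^{1/2}(\mathrm{I}-P_B)^2\Sigma^{1/2}$. Your version is in fact slightly more careful on two points you rightly flag: the whitened constraint set is $\{z: B^\T z=0\}$ so the relevant projector is $P_B$ and not $P_A$ (the paper's aside that $P_B=P_A$ is unnecessary and false in general, since the column spaces of $A$ and $\Sigma^{1/2}A$ differ), and the statement's $d\times p$ sizing of $A$ is inconsistent with the expression $A\Sigma A^\T$, which your $\Sigma A(A^\T\Sigma A)^{-1}A^\T$ form resolves.
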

\begin{proof} 
The covariance of $P_S(X) = \Sigma^{1/2}(I - P_A)^2 \Sigma^{1/2}$. Next, we have
\vspace{-0.2in}
\begin{align*}
    \text{Cov}(X \mid X \in S) &= \Sigma - \Sigma A (A^\T \Sigma A)^{-1}A^\T \Sigma \\
    &= \Sigma ^{1/2} (I - \Sigma^{1/2}A (A^\T \Sigma A)^{-1}A^\T \Sigma^{1/2})\Sigma^{1/2}\\
    & = \Sigma^{1/2} (I - P_B) \Sigma^{1/2} = \Sigma^{1/2} (I - P_B)^2 \Sigma^{1/2} = \text{Cov}(P_S(X)),
\end{align*}
where $B = \Sigma^{1/2} A$ and $P_B = B (B^\T B)^{-1} B^\T$. Also, $P_B = P_A$ since $A$ is full rank and $\Sigma$ is non-singular by assumption. A similar analysis shows that the mean also agrees.
\end{proof}
}
Next, we illustrate the consequence of this interpretation within the orthogonal calibration context. For the sake of simplicity, consider $q=1$. We set $A^{n \times p} = (a_1, \ldots, a_p) $ where $a_j = (g_j(x_1, \tilde{\theta}), \ldots, g_j(x_n, \tilde{\theta}))^\T$. Let $b = (b(x_1), \ldots, b(x_n))^\T$ denote the vector of bias function evaluated at $x_1, \ldots, x_n$. Then the finite-dimensional analogue of the orthogonality condition is $A^\T b = 0$. When $b_{\tilde{\theta}}(\cdot)$ is endowed with a zero mean Gaussian process prior with covariance kernel $C(\cdot, \cdot)$, then {\it apriori} the vector $b_{\tilde{\theta}} \sim \Gauss(0, K)$ with $K_{ij} = C(x_i, x_j),\, 1\leq i, j\leq n$. Also, $b_{\tilde{\theta}} \mid \theta, y^{(n)} \sim \Gauss( (K + \sigma^2_F)^{-1} z^{(n)}, (K + \sigma^2_F)^{-1} )$ where $z^{(n)} = (y(x_1) - f(x_1, \theta), \ldots, y(x_2) - f(x_n, \theta))^\T$. Using Proposition \ref{prop:projection_equivalence}, we can then obtain $b^*$. Hence, for Gaussian process priors, only the second step in Algorithm 1 from the main draft will change under the finite-dimensional projection setup. 

We now demonstrate how the finite-dimensional projection can be executed when $b_{\tilde{\theta}}$ is specified a non-Gaussian process prior. Here the posterior of $\mathbf{b}_k$ is also non-Gaussian. Suppose the prior is parameterized by $\eta$. For instance, with a BART prior, $\eta$ contains the trees and leaf node parameters. A prior on $b_{\tilde{\theta}}$ is specified by a prior distribution on $\eta$ which then produces a full conditional distribution of $\eta \mid \theta, y^{(n)}$. A posterior sample of $b_{\tilde{\theta}} \in \mathbb{R}^n$ can be drawn by sampling from $\eta \mid \theta, y^{(n)}$. For many priors, the joint distribution can be well approximated by a multivariate Gaussian with some mean and covariance $(\beta, \Phi)$ \citep{yang2017frequentist}. To find the approximating mean and covariance, we simply draw $M$ independent draws of $b_{\tilde{\theta}}$ by sampling $\{\eta_1, \ldots, \eta_M\}$ independently from $\eta\mid \theta, y^{(n)}$ and set $\beta = M^{-1} \sum_{m=1}^M b_{\tilde{\theta}, m} $ and $\phi = (M-1)^{-1} (b_{\tilde{\theta}, m} - \beta)(b_{\tilde{\theta}, m} - \beta)^\T$ where $b_{\tilde{\theta}, m}$ is the draw of $b_{\tilde{\theta}}$ corresponding to $\eta_m,\, m = 1, \ldots, M$. We note here, that $M$ can be made arbitrarily large compared to the dimension $n$ of $b_{\tilde{\theta}}$ so that the sample mean and covariance provide reasonable approximations to their population counterparts. This technique is summarized in the following algorithm.

\begin{algorithm}
\caption{Projection sampler 2: non-GP priors}
1. Sample $M$ independent copies of parameters $\eta \mid \theta, y^{(n)}$ and consider the corresponding draws of $b_{\tilde{\theta}}$ as $\{b_{\tilde{\theta}, 1}, \ldots, b_{\tilde{\theta, M}}\}$.

2. Compute the quantities $\beta = M^{-1} \sum_{m=1}^M b_{\tilde{\theta}, m} $ and $\Phi = (M-1)^{-1} (b_{\tilde{\theta}, m} - \beta)(b_{\tilde{\theta}, m} - \beta)^\T$.

2. Set the projection as $b_{\tilde{\theta}}^* = \Phi^{1/2}(I - P_G)\Phi^{-1/2}b_{\tilde{\theta}}$.

3. Update $\theta \sim \Pi(\theta \mid b^*_{\tilde{\theta}}, y^{(n)})$.

\label{algo:conditional_sampler2}
\end{algorithm}
Clearly, for non-Gaussian priors for $b_k$, the finite-dimensional projection approach is computationally much more expensive. However, if the prior is a GP, this approach is relatively less expensive to implement and yields similar results to the functional projection approach described in the main draft as can be seen from the results of our experiments in Tables \ref{tab:case1_finite} and \ref{tab:case2_finite}.

\begin{table}
    \centering
    \scalebox{0.9}{
    \begin{tabular}{ccc}
    \hline
         \multicolumn{3}{c}{\textbf{Model 1 ($\theta^* = 3.56$)}} \\
         \hline
         & PGP(Fn) & PBART(Fn)   \\
         \hline
         Mean & 3.59 & 3.58  \\
        Std. Dev. & 0.07 & 0.06 \\
         Coverage & 0.91 & 0.90 \\
         Runtime &30.72 sec & 17 min \\
         \hline
    \end{tabular}
    }
    \caption{Simulation results for the finite-projection method for \textbf{Model 1}. We report the posterior mean, posterior standard deviation, coverage of the 95\% credible intervals and the average runtime.}
    \label{tab:case1_finite}
\end{table}

\begin{table}
    \centering
    \scalebox{0.7}{
    \begin{tabular}{ccc}
    \hline
          \multicolumn{3}{c}{\textbf{Model 2 ($\theta^* = (0.2, 0.3)$)}} \\
         \hline
         & PGP(Fn) & PBART(Fn)\\
         \hline
         Mean & (0.2, 0.31) & (0.2, 0.31) \\
        Std. Dev. & (0.0003, 0.0004)  & (0.0008, 0.0008)\\
         Coverage & 0.91 & 0.92 \\
         Runtime & 2.65 min & 1.35 hrs \\
         \hline
    \end{tabular}
    }
    \caption{Simulation results for the finite-projection method for \textbf{Model 2}. We report the posterior mean, posterior variance, coverage of the 95\% credible intervals and the average runtime.}
    \label{tab:case2_finite}
\end{table}

\section{Univariate analysis on Schwarzschild model outputs}
We apply the proposed posterior projection technique on each output of the Schwarzschild model separately. We illustrate the specifics of our application with the first output of the Schwarzschild model, which is the mean velocity. Similar to the \cite{kennedy2001bayesian} setup, we have the following model for the observed velocities $v_F(x)$ and the corresponding code output $v_S(x)$  at location $x \in \mathcal{X}$:
\begin{equation}
    v_F(x_i) = v_R(x_i) + \epsilon_i = v_S(x_i;\tilde{\theta}) + b_{v,\tilde{\theta}}(x_i) + \epsilon_i, \quad \epsilon_i \sim \Gauss(0, \sigma_v^2),
\end{equation}
where we let $\tilde{\theta}$ to be the parameter value where the $L^2$ loss defined in previous sections is minimized and $b_{v, \tilde{\theta}}(\cdot)$ is the bias function corresponding to mean velocity at $\tilde{\theta}$. Since the explicit form of the model is not available, we use the posterior mean of a BART fit as the definition of $v_S(x;t)$ based on the model outputs only. We fix $\sigma_v$ to the posterior mean of $\sigma$ of a BART regression fit to $(v_{F,i}, x_i)_{i=1}^n$. We set $\theta_j \sim \Gauss(0, \gamma^2)$ with $\gamma = 10$ as priors for $\theta_j$, and $\Pi(\theta) = \prod_{j=1}^p \Pi(\theta_j)$. We implement PGP, PBART and OGP (with a Matern kernel having smoothness parameter 5/2) under these settings where for implementing Step 3 of Algorithm \ref{algo:conditional_sampler}, we use an adaptive Metropolis-Hastings sampler \cite{haario2001adaptive} which is tuned to maintain an acceptance rate of approximately 0.3. 

In Table \ref{tab:theta_sch}, we summarize the posterior distribution of the calibration parameters for all the four different outputs of the Schwarzschild model. Specifically, we report the posterior mean, standard deviation and the 95\% credible interval for all the three calibration parameters $(\theta_1, \theta_2, \theta_3)$. As expected, there are some differences in the posterior means for different velocity moments being fit. For example, the posterior mean for $\theta_1$ differs when using $h_4$ compared to when using $\tau$. The difference may be due to the physical degeneracy between the orbital anisotropy and black hole mass, such that large $\tau$ at the center of the 2-dimensional spatial grid can imply a large black hole mass but so too can a smaller $\tau$ with radial anisotropy, which is reflected in the values of $h_4$. Or, it could be due simply to the model inadequacy of a univariate treatment of the problem. In addition, there seems to a broad agreement between the posterior means for $\theta_1$ and $\theta_2$ under the two priors. However, the key difference is in the posterior mean of $\theta_3$. For the GP prior, the posterior mean for the 4 outcomes is roughly around 1, whereas for BART, the posterior mean is around 1.4. One possible explanation for this discrepancy is that for $\theta_3$, the computer code output is only available for $\theta_3 = 1, 2, 3$. The scarcity of data along this dimension may potentially lead to unstable estimates. A joint model accounting for the covariance between the measurements can reconcile this apparent discrepancy in the posterior distribution and the problems induced by sparse data. Numerical experiments in the next section illustrates the benefits of a joint model when outcomes are correlated. 


\begin{table}
    \centering
    \scalebox{0.6}{
    \begin{tabular}{c|c|ccc|ccc|ccc}
    \hline
         &  & \multicolumn{3}{c}{PGP} & \multicolumn{3}{c}{PBART}& \multicolumn{3}{c}{OGP} \\
         \hline
         &  & $\theta_1$ & $\theta_2$ & $\theta_3$ & $\theta_1$ & $\theta_2$ & $\theta_3$ & $\theta_1$ & $\theta_2$ & $\theta_3$ \\
         \hline
         \multirow{3}{*}{$v$} &Mean & 9.46 & 9.06 & 0.82 & 9.59 & 9.16 & 1.22 & 9.51 & 9.08 & 1.10\\
         & Std. Dev. & 0.14 & 0.08 & 0.09 & 0.27 & 0.15 & 0.07 & 0.37 & 0.26 & 0.18\\
         & Interval & (9.21, 9.80) & (8.93, 9.18) & (0.65, 1.00)& (9.20, 9.97) & (8.93, 9.50) & (1.12, 1.32) &  (8.45, 10.88) & (8.17, 9.6) & (0.52 , 1.60)\\
         \hline
         \multirow{3}{*}{$\tau$} & Mean & 9.98 & 9.12 & 1.02 & 9.59 & 9.03 & 1.42 & 9.82 & 9.24 & 1.31 \\
         & Std. Dev. & 0.09 & 0.07 & 0.10 & 0.10 & 0.13 & 0.18 & 0.21 & 0.19 & 0.30 \\
         & Interval & (9.81, 10.14) & (8.99, 9.26) & (0.87, 1.23) & (9.41, 9.76) & (8.84, 9.31) & (1.15, 1.76) & (9.23, 10.49) & (8.71, 9.81) & (0.35, 2.13) \\
         \hline
         \multirow{3}{*}{$h_3$} & Mean & 9.96 & 9.36 & 1.04 & 10.09 & 9.27 & 1.41 &  9.98 & 9.31 & 1.21  \\
         & Std. Dev. & 0.07 & 0.08 & 0.12 & 0.07 & 0.08 & 0.06 & 0.22 & 0.17 & 0.24 \\
         & Interval & (9.86, 10.11) & (9.15, 9.49) & (0.87, 1.25)& (9.97, 10.23) & (9.12, 9.41) & (1.25 , 1.41) & (9.36 , 10.69) & (8.84, 9.82) & (0.47, 1.85)  \\ 
         \hline 
         \multirow{3}{*}{$h_4$} & Mean & 10.18 & 9.07 & 1.09 & 9.73 & 9.09 & 1.53 & 10.16 & 9.11 & 1.42\\
         & Std. Dev. & 0.11 & 0.06 & 0.06 & 0.06 & 0.12 & 0.09 & 0.19 & 0.21 & 0.23 \\
         & Interval & (9.89, 10.62) & (8.97, 9.20) & (0.97, 1.19) & (9.61, 9.82) & (8.85, 9.28) & (1.35, 1.69) & (9.63, 10.77) & (8.53, 9.74) & (0.71, 2.03)\\
         \hline
    \end{tabular}
    }
    \caption{Posterior summaries for the calibration parameters $\theta_1$ = logarithm of mass of the black hole, $\theta_2$ = stellar mass-to-light ratio and $\theta_3$ = fraction of dark matter. The posterior mean, standard deviation (Std. Dev.) and the 95\% credible interval is reported here.}
    \label{tab:theta_sch}
\end{table}
\section{Calibration with multivariate outcomes}
The numerical experiments in the main draft were mostly concerned when one outcome is available. Here, we consider a simulation scenario where more than one outcome is measured, i.e. $q > 1$. We illustrate this with the case $q = 2$. For $x \in [0,1]$, consider the model $y_{F,1}(x) = f_1(x, \theta) + b_{\theta, 1} (x) + \epsilon_1$, and $ y_{F,2}(x) = f_2(x, \theta) + b_{\theta, 2} (x) + \epsilon_2$
where $y_{R, 1}(x) = 4x + x\sin 5 x$, $y_{R,2}(x) = [1 + \exp\{-6(x - 0.5)\}]^{-1}$ are the real data-generating processes. The respective computer models are $f_1(x, t) = tx$, $f_2(x, t) = \Phi\{t(x - 0.5)\}$ with $\Phi(\cdot)$ being the cdf of $Z \sim \Gauss(0,1)$. The errors are assumed to follow $\Gauss(0, \Sigma)$. 
The loss $L\{y_R(\cdot), f(\cdot, t)\} = \sum_{k = 1}^2 \int_{\mathcal{X}} \{y_{R, k}(x) - f_k(x, t) \}^2$ is minimized at $ \theta^* = 3.56$ approximately. However, the behavior of the $L_2$-loss in these two cases is very different, especially near the minimum. Indeed, as shown in the left panel of Figure \ref{fig:loss_comparison}, for the pair $(y_{R,1}, f_1)$, the loss increases quite sharply around 3.56, whereas for $(y_{R,2}, f_2)$, the increase in loss for $t>3.56$ is much slower. As such, when data from this model is fitted separately in the univariate framework, one should expect more uncertainty in $\theta$ for $(y_{R,2}, f_2)$. On the other hand, a joint multivariate fit should reflect the fact when the two real processes and their corresponding computer models are compared together, the posterior distribution of $\theta$ is much more centered around 3.56. 
\begin{figure}
     \centering
     \begin{subfigure}{0.4\textwidth}
         \centering
         \includegraphics[height = 3 cm, width = 5cm]{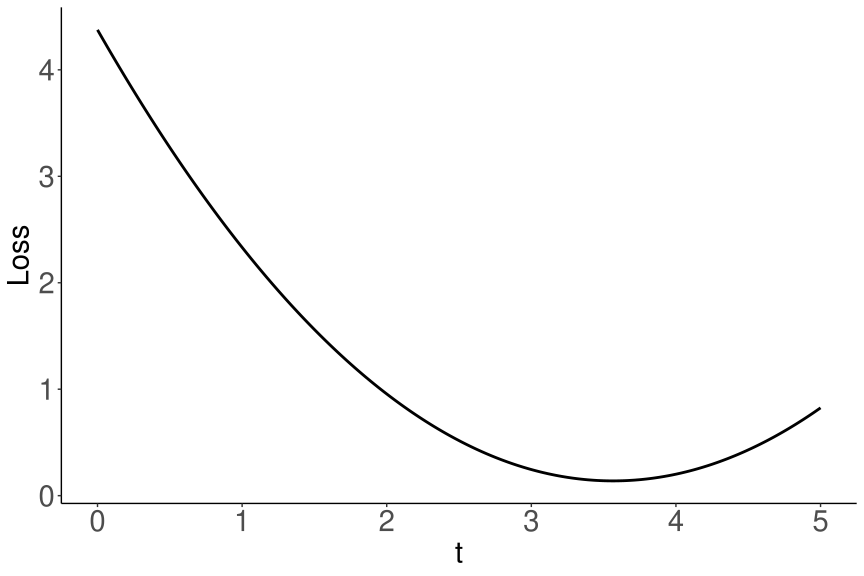}
         \caption{$L(t) = \int_0^1 \left\lbrace y_{R,1}(x) - f_1(x, t)\right\rbrace^2 dx$.}
     \end{subfigure}
     \begin{subfigure}{0.4\textwidth}
         \centering
         \includegraphics[height = 3cm, width = 5cm]{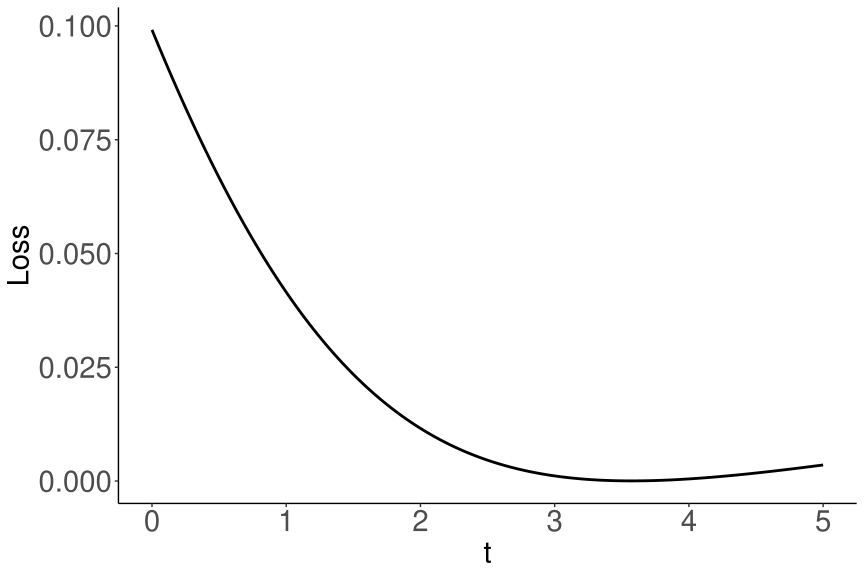}
         \caption{$L(t) = \int_0^1 \left\lbrace y_{R,2}(x) - f_2(x, t)\right\rbrace^2 dx$.}
     \end{subfigure}
     \caption{Loss comparison}
     \label{fig:loss_comparison}
\end{figure}

For data generation, we consider $n = 100$ and a covariance matrix $\Sigma_0$ such that the diagonal elements are $0.2^2$ and the off-diagonal elements are $0.012$. We consider PGP and PBART for this setup and use the functional projection. We fit this data separately in the univariate framework and then a joint model is fitted. For a fair comparison all hyperparameter values were the same. We plot the posterior distribution of $\theta$ obtained using the PBART method in Figure \ref{fig:univariate_multivariate}. In the first and second panel, the posterior distribution of $\theta$ is shown when two separate univariate models are fitted, and in the third panel we show that for a multivariate fit. As mentioned earlier, the posterior distribution of $\theta$ when data for the second outcome is fitted separately, is more dispersed and has relatively large mass in the interval [3, 4]. This is not the case for the first outcome, as almost the entire mass is within [3.5, 3.65]. However, the posterior distribution of $\theta$ form a multivariate fit is able to borrow information across the two outcomes so that most of the mass of the distribution is within [3.4,3.7].
\begin{figure}
    \includegraphics[width = \textwidth = 0.9, height = 5cm]{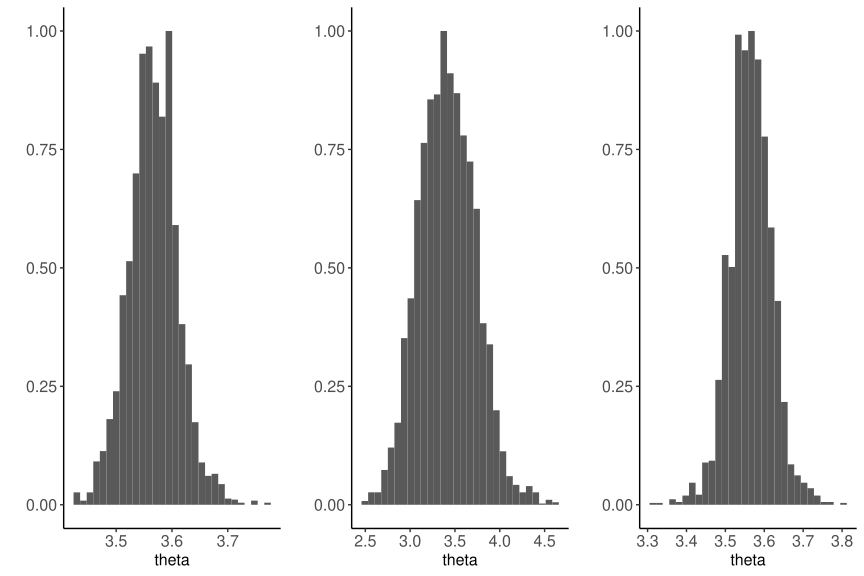}
    \caption{Posterior distribution of $\theta$ for univariate and multivariate PBART fits for the multivariate simulation model. Left and middle panel show posteriors for two separate univariate fits, and the right panel shows posterior from a combined multivariate fit.}
    \label{fig:univariate_multivariate}
\end{figure}

\end{document}